\theoremstyle{plain}
\newtheorem{thm}{THEOREM}[section]
\newtheorem{lm}[thm]{LEMMA}
\theoremstyle{definition}
\newtheorem{defi}[thm]{DEFINITION}
\theoremstyle{definition}
\newcommand{\upchi}{\raise1pt\hbox{$\chi$}}
\newcommand{\R}{{\mathord{\mathbb R}}}
\newcommand{\C}{{\mathord{\mathbb C}}}
\newcommand{\Z}{{\mathord{\mathbb Z}}}
\newcommand{\N}{{\mathord{\mathbb N}}}
\newcommand{\E}{{\mathcal{E}}}
\newcommand{\tr}{{\rm Tr}}
\renewcommand{\|}{{\Vert}}
\numberwithin{equation}{section}
\def\dd{{\rm d}}
\def\mc{{\mu_{{\rm C}}}}
\def\ha{{{\rm Hol},\alpha}}
\def\T{\mathbb{T}^1}
\def\H{\mathcal{H}}
\def\K{\mathcal{K}}
\def\Q{\mathcal{Q}}
\def\P{\mathcal{P}}
\def\olp{\overline{\phi}}
\begin{document}

\title{Exponential Relaxation to Equilibrium for a One-Dimensional
Focusing Non-Linear Schr\"odinger Equation with Noise.}

\author{Eric A. Carlen, J\"urg Fr\"ohlich and Joel Lebowitz}

\maketitle

\abstract

We construct generalized grand-canonical- and canonical Gibbs measures for
a Hamiltonian system described in terms of a complex scalar field that is defined 
on a circle and satisfies a nonlinear Schr\"odinger equation with a focusing 
nonlinearity of order $p<6$. Key properties of these Gibbs measures, in particular
 absence of ``phase transitions'' and regularity properties of field samples, are established.
We then study a time evolution of this system given by the Hamiltonian evolution 
perturbed by a stochastic noise term that mimics effects of coupling the system
to a heat bath at some fixed temperature. The noise is of Ornstein-Uhlenbeck type for
the Fourier modes of the field, with the strength of the noise decaying to zero, as the 
frequency of the mode tends to $\infty$.
We prove exponential approach of the state of the system to a grand-canonical Gibbs 
measure at a temperature and ``chemical potential'' determined by the stochastic noise term.

\section{Introduction}

The nonlinear Schr\"odinger equation (NLS) governs the time evolution of a system 
described in terms of a complex scalar field,  $\phi(x,t)$, where $x$ is a point in an 
open domain $U\subset \R^d$, and $t\in \mathbb{R}$ denotes time. The NLS equation is a Hamiltonian evolution equation corresponding to a Hamilton functional (Hamiltonian)
\begin{equation}\label{ham0}
H(\phi) = \frac12 \int_U |\nabla \phi(x)|^2\dd x  -\frac{\lambda}{p} \int_U |\phi(x)|^p \dd x  ,
\end{equation} 
$\lambda\in\mathbb{R}$, defined on a suitably chosen affine phase space on which it generates a canonical flow. 

The NLS equation has been used to model a broad variety of physical phenomena ranging from Langmuir waves  in plasmas to signal propagation in optical fibers \cite{SS, KMM}. 

It is of interest to couple the system with Hamiltonian (\ref{ham0}) to a heat bath in thermal equilibrium at an inverse temperature $\beta < \infty$ and to study the relaxation of the system to an equilibrium state at the temperature of the heat bath. As a prerequisite, we must first study the system with Hamilton functional given by (\ref{ham0}) from the point of view of the equilibrium statistical mechanics of fields and then analyze the effects of thermal noise generated by the heat bath on the dynamics of the system. The equilibrium Gibbs measures associated to (\ref{ham0}) are of special interest, because, formally, they are invariant under the Hamiltonian flow generated by (\ref{ham0}).
The construction of such Gibbs measures and the extension of the Hamiltonian flow to the non-smooth fields $\phi$ in the support of these measures
are particularly interesting in the {\em one-dimensional focusing case}, corresponding to $\lambda > 0$, with which this paper is concerned. (It is not possible to give mathematical meaning to the formal Gibbs measures for the \textit{focusing} NLS, with a non-trivial dependence of expectations on the coupling constant $\lambda>0$, in dimension $d\geq 2$; see \cite{Brydges}.) 
For $d=1$ and $\lambda \leq 0$, the Hamiltonian $H(\phi)$ is bounded from below, and it is not difficult, at all, to construct the corresponding Gibbs measures and to show that they are supported on H\"older continuous fields. Borrowing results from constructive quantum field theory \cite{GJ}, one can actually show that, in the \textit{defocusing} case, and for $p=4$, Gibbs measures can also be constructed in two and three dimensions.

The situation is very different in the \textit{focusing} case, because $H(\phi)$ is not bounded below,
 for $\lambda > 0$, and, worse, $e^{-\beta H(\phi)}$ is not integrable with respect to the formal ``Liouville measure'' $\mathcal{D}\phi\mathcal{D}\overline{\phi}$. (The precise meaning of this statement will be made clear below.) However, using that the $L^{2}$- norm of $\phi$ is a conserved quantity, one can actually construct what might be called  ``grand-canonical'' Gibbs measures for all
$\lambda >0$, assuming that  $p< 6$, 
and for $p = 6$, provided $\lambda$ is sufficiently small; as has first been outlined in \cite{LRS}.  
The existence of a Hamiltonian flow acting on the support of these measures was first proven 
by Bourgain in \cite{B}.

Ergodicity properties of the Hamiltonian flow are of obvious interest, but are hard to explore, mathematically.
However, if  the Hamiltonian flow is perturbed by  thermal noise emanating from a heat bath, which we model by a suitable source of random noise with the property that the resulting stochastic flow leaves the Gibbs measure corresponding to the temperature of the heat bath invariant, one can hope to establish exponential relaxation of a large class of initial states of the system towards
that Gibbs measure. For a finite-dimensional approximation of the system described by (\ref{ham0}) -- a truncation of Fourier modes -- this has been accomplished in \cite{MCL} and \cite{LW}. 

In this paper we present a new construction of the generalized ``grand-canonical'' Gibbs measures of \cite{LRS} and, moreover, of a ``canonical'' Gibbs measure, as explained below. We then introduce and study a stochastic evolution of the system leaving a generalized ``grand-canonical'' Gibbs measure
invariant, and we exhibit exponential convergence of a large class of initial conditions to this measure. 
We do $not$ need to introduce any truncation or finite-dimensional approximation of our system. 
We also discuss the corresponding problem for the ``canonical'' Gibbs measure, but do not treat it in full detail in this paper. 
Finally, for the ``grand-canonical'' Gibbs measures, we prove \textit{absence of phase transitions} under variations of the parameters $\beta$ and $\lambda$ and a generalized chemical potential (introduced below).  %%%% REFERENVE ADDED:
Related questions concerning he evolution of infnite dimensional measures has been investigated under different evolutions, e.g., the Hartree flow, in \cite{AN1,AN2}. 
It is of interestest to develop a toolkit that is relevant to a wide class of such problems, and we hope to make a contribution to this here. 

In the next section, we give a precise formulation of the mathematical problems to be reckoned with, in this paper.

\section{Precise formulation of the problems}

Consider the non-linear Schr\"odinger equation,
$$
i\frac{\partial }{\partial t} \phi  = -\Delta \phi  - \lambda |\phi|^{p-2}\phi,
$$ 
on the circle $\T$  of circumference $L$, with $\lambda> 0$. 

\smallskip
This equation is the Hamiltonian equation of motion corresponding to the Hamilton functional 
\begin{equation}\label{ham}
H(\phi) = \frac12  \int_{\T} |\phi'(x)|^2{\rm d}x  - \frac{\lambda} {p} \int_{\T} |\phi(x)|^p {\rm d}x\ 
\end{equation}
defined on the complex Sobolev space $H^1(\T)$, which can be interpreted as an affine phase space.
Since every function $\phi \in H^1(\T)$ is bounded, and hence in $L^p$ for all $p$, the Hamiltonian is well-defined and finite on all of $H^1(\T)$.
The phase space for this system is the real Hilbert space $\K$ obtained by regarding complex $H^{1}(\T)$ as a real Hilbert space and equipping it with the inner product 
\begin{equation}\label{inner}
\langle \phi,\psi\rangle_\K = \Re\left(\langle \phi,\psi\rangle_{L^2}\right)\ 
\end{equation}
where $\Re(z) $ is the the real part of $z\in \C$. (Note that $\K$ is not complete with respect to this
inner product; it is complete with respect to the inner product given by
$$
 \Re\left(\langle \phi, (-\Delta + 1) \psi\rangle_{L^2}\right)\ 
 $$
 But this does not present a problem. For many purposes, it is actually more natural to define phase space as
 a space of H\"older-continuous functions on the circle $\mathbb{T}^1$ equipped with the quadratic form defined in \eqref{inner}.)

In this infinite-dimensional setting, the Liouville measure does not exist, since, formally, it would be the Lebesgue measure on $\K$. However, Gibbs measures can be constructed on a function space properly 
containing $\K$.
Using a standard physics notation, $\mathcal{D} \phi  \mathcal{D} \overline{\phi}$, for the formal Liouville measure,
the {\em formal} Gibbs measure corresponding to (\ref{ham}) would be 
\begin{equation}\label{fg}
\frac{1}{Z} e^{-\beta H(\phi)} \mathcal{D} \phi  \mathcal{D} \overline{\phi}\ ,
\end{equation}
where $Z$ is a normalization constant.  Considered as a function of the parameters $\beta$, $\lambda$ and $L$,
$Z$  is  the  {\em partition function}.  To make sense of (\ref{fg}), one must show that $Z$ is finite; (i.e., that 
$e^{-\beta H(\phi)}$ is ``integrable'' with respect to $\mathcal{D} \phi  \mathcal{D} \overline{\phi}$). 
 
A key first step towards making sense of (\ref{fg}) is -- as is well known --  to define (\ref{fg}) for 
$\lambda = 0$, in which case only the kinetic energy term is present in the Hamiltonian. 
Expression (\ref{fg}) then defines a Gaussian measure, namely the Wiener measure on the space of periodic Brownian paths with period $L$, and
\begin{equation}\label{formg}
\dd \mu_\beta = \frac{1}{Z_0} e^{-(\beta/2) [\|\phi'\|_2^2 + m^2\|\phi\|_2^2]} \mathcal{D} \phi  \mathcal{D} \overline{\phi}\ ,
\end{equation}
with $m^{2} > 0$, is a formal expression for the Gaussian probability measure, $\mu_\beta$, with mean zero and covariance given by
$$
C := \beta^{-1}\left(m^2-\Delta\right)^{-1}\ ,
$$
where $\Delta$ denotes the Laplacian on $\T$, which is a well-defined probability measure 
on the space, $\Omega$, of (H\"older-) continuous functions on $\T$. The second moment (covariance) of $\mu_{\beta}$ is given by
$$\int_\Omega |\langle h,\phi\rangle_{L^2}|^2 \dd \mc(\phi) = \langle h,C h \rangle_{L^2}\ .$$

The Gaussian measure $\mu_\beta$ is the natural ``reference measure'' for our system, and we shall use it to construct a normalized ``grand-canonical'' Gibbs measure.
We will then introduce  stochastic perturbations of the Hamiltonian flow determined by (\ref{ham}) modeling the effects of thermal noise, for which the Gibbs measures constructed in this paper are the unique invariant measures. We will show that the corresponding stochastic dynamics drives a general class of initial states of the system exponentially fast to such a Gibbs measure. 

All this is  explained in more detail in subsequent sections. We start by recalling some well-known facts concerning the 
non-linear Schr\"odinger equation from the point of view of classical Hamiltonian systems with infinitely many degrees of freedom.

\subsection{The Hamiltonian flow}  

Let the Hamiltonian $H$ be defined by (\ref{ham}) on the  real Hilbert space $\K$, with 
inner product defined in (\ref{inner}).  
Thus, for any non-zero $\phi \in H^{1}(\T)$,
$\phi$ and $i\phi$ are linearly independent in $\K$.  A symplectic structure on $\K$ is introduced as follows:
Define a complex structure, $J$, on $\K$ to be the orthogonal transformation on the space, $T\K$, of vector fields on $\mathcal{K}$ given by
\begin{equation}\label{Jdef}
J\phi = -i\phi\ .
\end{equation}
Evidently $J^2= -I$.  The symplectic 2-form, $\omega$, is then defined by
$$
\omega(\phi,\psi) = \langle \phi,J\psi\rangle_\K\ ,
$$
for arbitrary $\phi$ and $\psi$ in $T\K$. (Using the inner product (\ref{inner}), one-forms on $\K$ and vector fields on $\K$ will henceforth be identified with functions in $\K$, without any danger of confusion.)

For any sufficiently regular, real-valued function $F$ on $\K$, we denote the Fr\'echet derivative of $F$ by $DF$. In other words, $DF(\phi)$ is the unique element of $\K$
such that 
\begin{equation}\label{Fder}
\langle DF(\phi),\psi\rangle_\K  = \frac{{\rm d}}{{\rm d}t}F(\phi+ t\psi)\bigg|_{t=0} \ ,
\end{equation}
for arbitrary $\psi\in \K$.
%For a complex valued functions $F$, we define $DF$ by applying $F$ separately to the 
%real and imaginary parts of $F$; note that the inner product in $\K$ is always real by definition.
Associated to any Fr\'echet-differentiable function, $F$, on $\K$ there is a {\em Hamiltonian vector field} on $\K$ given by
$$X_F(\phi) = JDF(\phi)\ .$$
Suppose that the ODE 
$$
\frac{{\rm d}}{{\rm d}s}\phi(s)  = X_F(\phi(s))\ ,\qquad \phi(0) = \phi_0,
$$
has a unique solution, for all $\phi_{0}\in \K$.  Let  $\Phi^s_F$ be the associated {\em flow} on $\K$ given by $\Phi^s_F(\phi_0) = \phi(s)$.
Then the transformations $\{\Phi_F^s\}_{s\in \R}$ form a one-parameter group under composition.   

As an example, we choose $F$ to be given by $F(\phi) = \langle \phi, \phi \rangle_{\K}$, which is also the squared $L^2(\T)$- norm  of $\phi$.
One readily verifies that  $DF = 2\phi$, and hence $X_F(\phi) = -2i\phi$. The Hamiltonian flow 
associated to $F$, i.e., the flow given by solving the equation
$$
\frac{{\rm d}}{{\rm d}s}\phi(s) = X_F(\phi) = -2i\phi, \qquad \phi(0) = \phi_0,
$$
is simply $\phi(s,x) = e^{-2is}\phi_0(x)$, a simple phase transformation (``gauge transformation of the first kind''). 

The Hamiltonian vector field $X_H(\phi)$ associated to the Hamilton functional $H$ is not defined on all of 
$\K$, but only on the dense subset $H^2(\T)$. In fact, $H$ is Fr\'echet-differentiable at $\phi$ if and only if
 $\phi\in H^2(\T)$, in which case
 $$
 DH(\phi) = -\Delta \phi -\lambda |\phi|^{p-2}\phi\ .
 $$
 Thus, for $\phi\in H^2(\T)$, we have that
 $X_H(\phi) = -i\left(-\Delta \phi -\lambda |\phi|^{p-2}\phi\right)$,
and the Hamiltonian flow $\Phi_H^t$ generated by this vector field is given by setting $\Phi_H^t(\phi_0)$ equal to $\phi(t)$, the solution of the NLS equation
 $$
 \frac{\rm{d}}{\rm{d}t}\phi(t) = X_H(\phi) = i\Delta \phi + i\lambda |\phi|^{p-2}\phi, \qquad \phi(0) = \phi_0,
 $$
 at time $t$. (We use $t$, for time, as the parameter of the Hamiltonian flow, instead of the usual $s$.) 
 Writing this out as a PDE on $\T$, instead of as an ODE in $\K$, we have the  non-linear Schr\"odinger equation
 on $\T$:
\begin{equation}\label{nls}
i\frac{\partial}{\partial t} \phi  = -\Delta \phi  - \lambda |\phi|^{p-2}\phi\ .
\end{equation} 

The {\em Poisson bracket} on pairs of functions on phase space $\K$ is given by 
$$\{F,G\} = \langle DF, J(DG)\rangle_\K\ .$$
Since $J$ is antisymmetric, $\{F,G\} = -\{G,F\}$.  By definition 
$$\{F,G\}(\phi)  =  DF(X_G)(\phi)  =  \frac{{\rm d}}{{\rm d}s} F( \Phi_{G}^s\phi)\bigg|_{s=0}\ .$$
Since the Poisson bracket is anti-symmetric, we also have that
$$\{F,G\} (\phi)=  - DG( X_F)(\phi)  =  \frac{{\rm d}}{{\rm d}s} G( \Phi_{F}^s\phi)\bigg|_{s=0}\ .$$

In particular, taking $G = H$, we see that if  the Hamiltonian $H$ is invariant under the flow $\Phi_F^s$,
so that 
$\frac{{\rm d}}{{\rm d}s} H( \Phi_{F}^s\phi) = 0$,  then $\{F,H\} = 0$, and so  $\frac{{\rm d}}{{\rm d}t} F( \Phi_{H}^t\phi) = 0$.
That is, 
$$
\frac{{\rm d}}{{\rm d}s} H( \Phi_{F}^s\phi) = 0 \quad \Rightarrow \quad   \frac{{\rm d}}{{\rm d}t} F( \Phi_{H}^t\phi) = 0\ ,
$$
i.e., any F that generates a dynamical symmetry is a conservation law.
In particular, since $H(e^{i2s}\phi) = H(\phi)$, for all $s$ and $\phi$, the function $F(\phi) = \|\phi\|_\K^2$ generates a dynamical symmetry of the system (i.e.,  $\frac{{\rm d}}{{\rm d}s} H( \Phi_{F}^s\phi) = 0$), and hence $\|\phi(t)\|_\K^2 = \|\phi\|_{L^2(\T)}^{2}$ is a conservation law. In other words, $\|\phi(t)\|_{L^2(\T)}^{2}$ is constant along the flow generated by $H$; i.e., along solutions of the non-linear Schr\"odinger equation.  This fact can of course be seen directly, but since it plays an important role in what follows it is useful to recall that the conservation law for the $L^2$ norm is a consequence of the invariance of the Hamiltonian under phase transformations.

\if false

\subsection{Darboux coordinates}

We shall need coordinates to do computations and make approximations.  Let $\Q$ be the subspace of $\K$ consisting
of real valued functions, and let $\P$ be the subspace consisting of imaginary valued functions. Then these two subspaces
are orthogonal complements of one another; i.e., $\K = \Q \oplus \P$, and $J$ is an orthogonal transformation of 
$\Q$ onto $\P$, and hence $\P$ onto $\Q$, since $J^2 = -I$. 

Let 
 $\{u_n\}$ be  any orthonormal basis of $\Q$.   Let $v_n= -Ju_n$. Then $\{v_n\}$ is an orthonormal basis for $\P$. 
 Let $P_\Q$ and $P_\P$ denote the orthogonal projections onto $\Q$ and $\P$ respectively. 
 Note that  both  $\{u_n\}$ and $\{v_n\}$ are  also  orthonormal bases of $L^2(\T)$, 
 and hence
\begin{equation}\label{del}
\sum_{n=1}^\infty u_n(x)u_n(y) =  \delta(x-y) \ .
\end{equation}  
However, for any $\phi\in H^1(\T)$, so that $\phi(x)$ is well-defined  for each $x$,
 \begin{equation}\label{can9}
 \sum_{j=1}^{n} \langle \phi ,u_n\rangle_\K u_n(x) = P_\Q \phi(x) \qquad{\rm and}\qquad 
\sum_{j=1}^{n} \langle \phi ,v_n\rangle_\K v_n(x) = P_\P \phi(x)\ .
 \end{equation}

 For any $\phi\in \K$, define
 $$q_n(\phi) = \langle \Re \phi,u_n\rangle_\K \qquad{\rm and}\qquad p_n(\phi) = \langle \Im \phi,u_n\rangle_\K\ ,$$
so that 
 \begin{equation}\label{can10}
 \phi   = \sum_{n=1}^\infty (q_n(\phi)+ip_n(\phi))u_n \qquad{\rm and}\qquad  \olp = \sum_{n=1}^\infty (q_n(\phi)-ip_n(\phi))u_n\ .
 \end{equation}

 Furthermore, since $\Im \phi$ is the projection of $i\phi$ onto $\Q$,  $\langle \Im \phi,u_n\rangle_\K = \langle -i\phi,u_n\rangle_\K = \langle  \phi,-Ju_n\rangle_\K$, and therefore
  \begin{equation}\label{can11}
  Dq_n(\phi) = u_n \qquad{\rm and}\qquad Dp_n(\phi) = -Ju_n = v_n\ .
   \end{equation}
 independent of $\phi$, and so, for all $m$ and $n$, 
 \begin{equation}\label{can2}
 \{q_m,q_m\}  = \{p_m,p_n\} = 0\qquad{\rm and}\qquad \{q_m,p_n\} =  \delta_{m,n}\ .
 \end{equation}
 
Also, since when $D$ is applied to complex valued functions it is separately  applied to the real and imaginary parts,
$$
D\left( \sum_{n=1}^m (q_n(\phi)+ip_n(\phi))u_n(x)\right) = \sum_{n=1}^m [u_n u_n(x) + v_n v_n(x)]\ .$$

For $\psi\in H^2(\T)$,  not only is the evaluation functional $x\mapsto \phi(x)$  well defined, but  
$$\sum_{n=1}^\infty  [\langle \psi, u_n\rangle_\K  u_n(x) + \langle \psi, v_n\rangle_\K v_n(x)] = \psi(x)$$
where the series converge absolutely  and where  
where (\ref{can9}) identifies the sum. 
By (\ref{can10}), 
$\phi(x) = \lim_{m\to \infty}\sum_{n=1}^m (q_n(\phi)+ip_n(\phi))u_n(x)$. 
Combining the last two results, and letting $[\phi(x)]$ denote the functional that evaluates $\phi$ at $x$, we have 
that everywhere on $H^1(\T)$,
\begin{equation}\label{can12}
D[\phi(x)] = \delta(x-y)\ .
\end{equation}

There is a useful and more traditional way to write this. Taking $p$ and $q$ as the usual real coordinates on the complex plane, 
let $\phi = q+ip$ and $\olp = q-ip$. Then a simple computation shows that
$$\frac{\partial}{\partial \phi} = \frac12\left( \frac{\partial}{\partial q} - i \frac{\partial}{\partial p}\right) \qquad{\rm and}\qquad
\frac{\partial}{\partial \olp} = \frac12\left( \frac{\partial}{\partial q} + i \frac{\partial}{\partial p}\right)\ .$$
Of course, we then have $\partial \phi/\partial \phi = \partial \olp/\partial \olp =1$ and 
$\partial \phi/\partial \olp = \partial \olp/\partial \phi =0$.

For $F$ that is Fr\'echet differentiable we define
 \begin{equation}\label{can3}
 \frac{\delta F}{\delta \phi} = \frac12 \left( P_\Q DF   + P_\P DF\right)\qquad{\rm and}\qquad 
  \frac{\delta F}{\delta \olp}  = \frac12 \left( P_\Q DF   -  P_\P DF\right)\ ,
 \end{equation}
 and
  \begin{equation}\label{can3C}
 \frac{\delta F}{\delta q} =  P_\Q DF  \qquad{\rm and}\qquad 
  \frac{\delta F}{\delta p}  = JP_\P DF\ .
 \end{equation}
 Recalling that $J$ is multiplication by $-i$, and that for $F$ real valued, $P_\Q F$ is a real valued function, and 
 $P_\P DF$ is imaginary valued, so that $\delta F/\delta p$ is real valued. 
 We note that
  \begin{equation}\label{can3D}
 \frac{\delta }{\delta q} =   \left(\frac{\delta }{\delta \phi} + \frac{\delta }{\delta \olp} \right)  \qquad{\rm and}\qquad 
 \frac{\delta }{\delta p} =   J\left(\frac{\delta }{\delta \phi} - \frac{\delta }{\delta \olp} \right) \ .
 \end{equation}
 We now compute $\{F,G\} = \langle DF,JDG\rangle_\K = \langle P_\Q DF,JP_\P DG\rangle_\K + \langle P_\P DF,JP_\Q DG\rangle_\K$ in these terms:
 $$\{F,G\} = \int_{\T}  \left[  \frac{\delta F}{\delta q} \frac{\delta G}{\delta p}-   \frac{\delta F}{\delta p} \frac{\delta G}{\delta q}\right]{\rm d}x $$
 and then. by (\ref{can3D}), 
 $$\{F,G\} = 2i\int_{\T}  \left[  \frac{\delta F}{\delta \phi} \frac{\delta G}{\delta \olp}-   \frac{\delta F}{\delta \olp} \frac{\delta G}{\delta \phi}\right]{\rm d}x $$

Finally, a slightly formal but suggestive way of rewriting   (\ref{can12})  is:
\begin{equation}\label{can13}
\frac{\delta \phi(x)} {\delta \phi(y)} =  \frac{\delta \olp(x)} {\delta \olp (y)} =  \delta(x-y)
\qquad{\rm and}\qquad \frac{\delta \phi(x)} {\delta \olp(y)} = \frac{\delta \olp(x)} {\delta \phi(y)} = 0\ .
\end{equation}

We now compute 
$\{\phi(x),H\}$ which gives the evolution of $\phi(x)$ under the Hamiltonian flow using the coordinates introduced above.
First, 
$$\frac{\delta}{\delta \olp} H = -\frac12 \Delta \phi -\frac12\lambda |\phi|^{p-2}\phi\ ,$$
By (\ref{can13}), 
$$
\{\phi(x),H\} =  i \int_{\T} \delta(x-y) \left[ -\Delta \phi(y) -\lambda |\phi|^{p=2}\phi(y)\right]{\rm d}y 
= i[-\Delta \phi - \lambda |\phi|^{p-2}\phi](x)\ ,
$$
and so  the Hamiltonian flow acts on $\phi(x)$ so that $\phi(x,t)$ solves the non-linear Shroedinger equation 
\begin{equation}\label{nls}
i\frac{\partial }{\partial t} \phi  = -\Delta \phi  - \lambda |\phi|^{p-2}\phi\ .
\end{equation}

\fi

\subsection{Gibbs measures on phase space} 

Recalling the formal expression (\ref{formg}) for the Gaussian reference measure $\mu_\beta$, the  formal canonical 
Gibbs measure for our system is  given by 
\begin{eqnarray}\label{canX}
\frac{1}{Z} e^{-\beta H(\phi)} \mathcal{D} \phi  \mathcal{D} \overline{\phi}  &= &
\frac{Z_0}{Z} e^{\beta\frac{\lambda}{p} \|\phi\|_p^p + \frac{\beta}{2} m^2\|\phi\|_2^2}
\frac{1}{Z_0} e^{-\frac{\beta}{2} [\|\phi'\|_2^2 + m^2\|\phi\|_2^2]} \mathcal{D} \phi  \mathcal{D} \overline{\phi} \nonumber\\
&=&  \frac{Z_0}{Z} e^{\beta\frac{\lambda}{p} \|\phi\|_p^p + \frac{\beta}{2} m^2\|\phi\|_2^2}  \dd \mu_\beta\ .
\end{eqnarray}
For $\lambda \leq 0$ and $p>2$, there is no difficulty in proving that  the exponential of $\beta\frac{\lambda}{p} \|\phi\|_p^p + \beta m^2\|\phi\|_2^2$
is integrable with respect to $\mu_\beta$. Thus, in this case, one obtains a well defined normalized Gibbs measure almost without effort. This measure is related to the path-space measure appearing in the Feynman-Kac formula for the (imaginary-time) propagator of the anharmonic oscillator.

The situation is quite different when $\lambda > 0$. Then, for $p>2$, the exponential of $\beta\frac{\lambda}{p} \|\phi\|_p^p + \beta m^2\|\phi\|_2^2$
is \textit{not} integrable with respect to $\mu_\beta$.   However,  for each $n>0$, the spherical subset of $\K$ defined by
$$\{ \phi \ :\ \|\phi\|_2^2 = n\}$$
is invariant under the Hamiltonian flow.   We introduce the random variable
\begin{equation}\label{Ndef}
N(\phi) = \|\phi\|_2^2\ .
\end{equation}
Below we will show that, with respect to $\mu_\beta$, $N(\phi)$ has a smooth density, $\rho_N(s)$, that is strictly positive  on $\R_+$. We then consider the probability measures $\nu_{n, \beta}$ formally defined by 
\begin{equation}\label{nunbf}
\nu_{n,\beta} = \frac{1}{\rho_N(n)}\delta(N(\phi) - n) \mu_\beta\ , \qquad n\in \mathbb{R}_{+}.
\end{equation}
More precisely, $\nu_{n,\beta}$ is the law of $\phi$ under $\mu_\beta$, conditioned on the event that $\{N(\phi) = n\}$. This actually involves conditioning on
a set of measure zero, and since the measure $\nu_{n,\beta}$, while a natural object, is not usually studied in the literature, one task to be carried out below is to show that $\nu_{n,\beta}$ is well defined and to prove certain exponential integrability estimates with respect to 
$\nu_{n,\beta}$.  (The measures studied in \cite{LRS} are defined on the balls $\lbrace \phi \vert N(\phi) \leq n\rbrace \subset \K$.)

Exponential integrability estimates are of interest, because a natural {\em canonical Gibbs measure} $\gamma_{n,\beta,\lambda}$ can then be defined, formally, by 
\begin{equation}\label{microcanX}
\dd \gamma_{n,\beta,\lambda} = \frac{1}{Z(\beta, \lambda, n)}e^{\beta\left[\frac{\lambda}{p} \|\phi\|_p^p + \frac{1}{2}m^{2}n\right]}\dd\nu_{n,\beta}\ ,
\end{equation}
provided that 
$$
Z(n,\beta, \lambda) = e^{\beta m^{2} n}\int_\Omega e^{\beta\frac{\lambda}{p} \|\phi\|_p^p}\dd \nu_{n,\beta} < \infty\ .
$$
We will prove this for all $\lambda > 0$, assuming that $p < 6$.

We shall also be interested in a {\em generalized grand-canonical Gibbs measure}.  
For any $r>0$, we consider the modified Hamiltonian $\tilde H(\phi)$
given by
$$
\tilde H(\phi) = H(\phi) + \kappa N(\phi)^r\ ,
$$
where $N(\phi)$ is the random variable introduced in (\ref{Ndef}).
The  flow generated by this modified Hamiltonian is given by solving the equation
\begin{equation}\label{nlsm}
i\frac{\partial }{\partial t} \phi  = -\Delta \phi  - \lambda |\phi|^{p-2}\phi  + 2r\kappa N(\phi)^{r-1} \phi\ ,
\end{equation} 
where $N(\phi(t)) = N(\phi(0))$ is a constant of the motion, so that the effect of the modification of the Hamiltonian is nothing more than a phase transformation in  the flow.   We shall see that, for 
$\kappa>0$ and $r$ sufficiently large (depending on $p< 6$), 
$\beta\frac{\lambda}{p} \|\phi\|_p^p + \beta m^2\|\phi\|_2^2 - \beta \kappa N(\phi)^r$ is exponentially integrable with respect to 
$\mu_\beta$, and hence
$$
\tilde Z(\kappa,\beta,\lambda) = \int_\Omega e^{\beta\frac{\lambda}{p} \|\phi\|_p^p  - \beta \kappa N(\phi)^r}\dd \mu_{\beta} < \infty\ .
$$

We may think of $N(\phi)$ as a sort of ``particle number observable'' and, consequently, of $\kappa$
 as a sort of ``chemical potential''.   We  define a {\em generalized grand-canonical Gibbs measure}, 
 $\widetilde \gamma_{\beta,\lambda,\kappa}$, by 
\begin{equation}\label{grandcan}
\dd \widetilde \gamma_{\kappa,\beta,\lambda} = \frac{1}{\widetilde Z(\kappa,\beta, \lambda)}e^{\beta\frac{\lambda}{p} \|\phi\|_p^p - \beta \kappa N(\phi)^r}\dd \mu_{\beta} \ .
\end{equation}

We will investigate analyticity properties of the partition functions $Z(n,\beta,\lambda)$ and
 $\widetilde Z(\kappa,\beta,\lambda)$ in the parameters $\beta, \lambda$ and $n$ or $\kappa$, respectively; (see Section 4). These analyticity properties show that the system studied here does not exhibit any phase transition. Our result answers a question posed in \cite{LRS,LRS2}. As described in the next subsection and proven in Section 5, relaxation to the Gibbs measure under a suitably chosen stochastic time evolution is exponential, for an arbitrary choice of the parameters $\beta \in \mathbb{R}$, $\lambda \in \mathbb{R}$ and $\kappa > 0$ compatible with exponential integrability. This is another reflection of the absence of ``phase transitions''.
 
\subsection{Stochastic perturbations of the Hamiltonian flow}

Using the notations introduced in subsection 2.1, the non-linear Schr\"odinger equation can be written as an infinite-dimensional ordinary differential equation:
$$ \dd \phi(t) = JDH(\phi(t))\dd t \ .$$

Let $\sigma$ be a self-adjoint Hilbert-Schmidt operator on $\K$, so that $\sigma^2$ is a positive, trace-class operator on $\K$. Let $w(t)$ denote ``Brownian motion on $\K$'', and consider the stochastic differential equation 
\begin{equation}\label{noise2}
 \dd \phi(t) = JDH(\phi(t))\dd t  - \frac{\beta}{2} \sigma^2 DH(\phi) \dd t + \sigma \dd w(t)\ .
 \end{equation}
 
 We will prove that this equation defines a stochastic process for which the generalized grand-canonical Gibbs measure is invariant, and, moreover, that
 the process drives initial states to this Gibbs measure at an exponential rate. We emphasize that, although we will prove %%%%CHANGE HERE
 the first claim for all $0<p<6$, $\lambda> 0$ and all $\kappa > 0$, we prove the latter only for $p\leq 4$, and we will not derive an explicit estimate on the rate of exponential relaxation. %%%% CHANGE ENS HERE
 As will be shown elsewhere, explicit estimates can be obtained, provided 
 $\lambda$ is so small that $H(\phi)$ has a certain convexity property. The required convexity property actually fails for large $\lambda$, c.f. \cite{LW}; but our proof of exponential relaxation still applies. 

In order to prove that (\ref{noise2}) defines a stochastic process that has $ \widetilde \gamma_{\beta,\lambda,\kappa} $ as its inviariant measure, we  require $\sigma^2$ to be trace-class. However, our proof of exponential relaxation requires that $\sigma^2$ not be too small. 
Indeed, if $\sigma^2 = 0$ the stochastic flow reduces to the Hamiltonian flow for which exponential relaxation cannot hold. Existence of solutions of equation (\ref{noise2}) can be proven under broader conditions. For instance, in \cite{DP}, existence of strong solutions of (\ref{noise2}) is proven when 
$\sigma^2$ is the identity.

Under appropriate assumptions on $\sigma^{2}$, the stochastic perturbation of the Hamiltonian flow considered in equation (\ref{noise2}) introduces just the right amount  of  noise to enable us to solve this equation and establish exponential relaxation to a grand-canonical Gibbs measure.  It is possible to turn on stronger noise and, yet, still be able to solve the equation. But it is then not clear how to exhibit exponential relaxation to the grand-canonical Gibbs measure, as we will explain below. In any case, the model 
with $\sigma^2$ being trace-class considered in this paper is quite natural, since  the physical mechanism 
responsible for the noise (namely a coupling of the system to a thermal bath) does not excite all Fourier modes of $\phi$ equally: The trace-class condition can be interpreted as implying that, in some sense, 
only low-frequency modes are excited, as is the case for thermal noise. We do, however, not impose a hard cut-off on the (number of) modes directly coupled to the noise.

Defining a stochastic flow that leaves the canonical Gibbs measures invariant is somewhat 
more subtle and involves introducing noise for which 
the spheres $\{\ \phi\ :\ N(\phi) = n\}$ are invariant subsets. We will discuss such noise in the final section of the paper. Once we will have carried out
our analysis of the stochastic dynamics in the generalized grand-canonical ensemble, it will be easier to explain what needs to be done to treat the canonical ensemble. (A complete analysis of the canonical ensemble is saved for a follow-up paper.)

Our paper is organized as follows. In Section 3, we prove the basic integrability estimates on which our analysis rests and construct the canonical and generalized grand-canonical Gibbs measures. Our construction provides important  ``almost-sure regularity information'' on the functions $\phi$ 
in the support of these measures. 
%In Section 3 we solve the SDE (\ref{noise2}) using the Girsanov formula and our estimates from Section 2. 
In Section  4, we prove analyticity properties of the partition functions in the parameters. 
In Section 5, we prove exponential relaxation to a Gibbs measure for the stochastic dynamics. 
In Section 6, we discuss relaxation to a Gibbs measure in the canonical ensemble.

\section{Exponential integrability estimates}

We recall that $\mu_\beta$ is defined to be the Gaussian probability measure with mean zero and covariance
\begin{equation}\label{cov}
C := \beta^{-1}\left(m^2-\Delta\right)^{-1}\ 
\end{equation}
defined on the measure space, $\Omega$, of H\"older-continuous functions on the circle $\T$. In \eqref{cov},
$\Delta$ denotes the Laplacian on $\T$.
(We explicitly exhibit $\beta$, but not $m$, in our notation for the Gaussian measure, 
since the choice of the parameter $m>0$ in the reference measure is entirely
inconsequential. It is not a parameter in the NLS equation, and in expressions such as (\ref{canX}) and (\ref{microcanX}) it is added in and subtracted out again for convenience, namely to avoid fussing about zero modes.)

As explained in the introduction, since the value of $N(\phi)$
is conserved under the NLS flow, it is natural to consider the 
probability measures $\nu_{n,\beta}$ obtained by
conditioning $\mu_\beta$ on the subsets $\{N(\phi) = n\}, n \in \mathbb{R}_{+}$. The construction of these measures involves conditioning on a subset of the support of $\mu_{\beta}$ of measure zero. We must therefore establish regularity properties of the functions $\phi$ in the support of $\mu_{\beta}$ that allow us to use $\nu_{n,\beta}$ as a reference measure in the construction of a canonical Gibbs measure.

In the next subsection we derive integrability estimates for certain exponential functionals when intergrated  against $\mu_\beta$  and  $\nu_{n,\beta}$. These estimates will be used in our construction
of Gibbs measures.

\subsection{Exponential integrability with respect to $\mu_\beta$ and $\nu_{\beta,n}$}

For $k\in \Z$, we set $u_k(x): = L^{-1/2}e^{2\pi i kx/L}$. We note that $\{u_k\}_{k\in\Z}$ 
is a an orthonormal basis in $L^2(\T)$. 
We regard the complex-valued random variables
$$X_k(\phi) = \langle u_k,\phi\rangle$$
as $\R^2$- valued random variables under the obvious identification of $\mathbb{C}$ with $\mathbb{R}^{2}$. Evidently, they are mutually independent with respect to $\mu_\beta$.   
Define $\theta_k$ by
$$
\theta_k := \langle u_k,C u_k\rangle^{-1}  = \beta((2\pi k/L)^2 + m^2)\ .
$$
Then the law of $X_k(\phi)$ on the plane has the density
$$\frac{\theta_k}{2\pi}\exp\left(- \frac12 \theta_k(x_1^2+ x_2^2)\right)\ $$
with respect to Lebesgue measure on the plane. 

It follows that each $\|X_k(\phi)\|_2^2$ is $\Gamma$- distributed,
with density given by
$$
\frac{\theta_k s}{2}\exp\left(- \frac12 \theta_k s\right)\
 $$
with respect to Lebesgue measure $\dd s$ on $\R_+$.
Since each summand in  $\sum_{k\in \Z}\|X_k(\phi)\|^2$ has a smooth, strictly positive density, and since the convolution of such a measure with any probability measure
has a smooth and strictly positive density, it follows that $N(\phi) $
has a smooth density $\rho_{N}(s)$ that is strictly positive for all $s > 0$. 

Next, let 
$$
Y_k(\phi) : =   \langle u_k,[\beta (m^2-\Delta)]^{\gamma/2}\phi\rangle,
$$
with $\gamma > 0$, which we regard as an $\R^2$- valued random variable. Of course,
$$Y_k(\phi) = \theta_k^{\gamma/2} X_k(\phi) =  \theta_k^{\gamma/2}  X_k(\phi) \ ,$$
for any $k \in \mathbb{Z}$. Therefore, for $\xi,\epsilon\in \R$,
\begin{eqnarray}
\int _\Omega \exp(i\xi \|X_k\|_2^2+ \epsilon \|Y_k\|_2^2)\dd \mu_\beta &=& 
\int _\Omega \exp([ i\xi  + \epsilon \theta_k^\gamma]\|X_k\|_2^2)\dd \mu_\beta\nonumber\\
&=& \frac{\theta_k}{  \theta_k  -2i\xi -2   \epsilon \theta_k^\gamma}\nonumber\\
&=& \left( 1 - \frac{2i\xi  }{\theta_k}  -\frac{2\epsilon  }{\theta_k^{1-\gamma}} \right)^{-1}\ .\nonumber
\end{eqnarray}

Since $\theta_k = \mathcal{O}(k^2)$, it is evident that as long as $\gamma < 1/2$ and $2\epsilon \leq (\beta m^2)^{1-\gamma}$,
$$F(\xi,\epsilon) = \prod_{k\in \Z} \left( 1 - \frac{2i\xi  }{\theta_k}  -\frac{2\epsilon  }{\theta_k^{1-\gamma}} \right)^{-1}$$
converges, and hence that
$$
 \int e^{i\xi \|\phi\|_2^2}e^{\epsilon\|(m^2-\Delta)^\gamma\phi\|_2^2}{\rm d}\mu_{\beta}\ = F(\xi,\epsilon)
 $$
 converges.

We use the fact that $F(\xi,\epsilon)$ decays rapidly in $\xi$, in order to establish regularity properties of the
 measures $\nu_{n,\beta}$. 

\begin{lm}\label{rapid}  For all $\gamma < 1/2$ and $2\epsilon \leq (\beta m^2)^{1-\gamma}$,
there are finite positive constants $K_1$ and $K_2$ depending only on $\gamma$, $\epsilon$, $\beta$, $L$ and $m$ so that 
\begin{equation}\label{decay}
\Re \log F(\xi,\epsilon)  \leq K_1  -  K_2 \sqrt{(|\xi |-1)_+}\ .
\end{equation}
\end{lm}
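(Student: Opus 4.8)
The plan is to pass to the real part of the logarithm of the convergent product. Writing each factor of $F$ as
\[
1-\frac{2i\xi}{\theta_k}-\frac{2\epsilon}{\theta_k^{1-\gamma}}=a_k-ib_k,\qquad a_k:=1-\frac{2\epsilon}{\theta_k^{1-\gamma}}\in(0,1],\quad b_k:=\frac{2\xi}{\theta_k},
\]
(so $a_k>0$ since $\theta_k\ge\beta m^2$ and $2\epsilon\le(\beta m^2)^{1-\gamma}$, and $a_k\to 1$), absolute convergence of the product gives
\[
\Re\log F(\xi,\epsilon)=-\sum_{k\in\Z}\log|a_k-ib_k|=-\tfrac12\sum_{k\in\Z}\log\bigl(a_k^2+b_k^2\bigr),
\]
so \eqref{decay} is equivalent to the lower bound $\sum_{k\in\Z}\log(a_k^2+b_k^2)\ge -2K_1+2K_2\sqrt{(|\xi|-1)_+}$. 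For $|\xi|$ in a bounded set the right-hand side of \eqref{decay} is bounded below while $\Re\log F(\xi,\epsilon)$ is finite and continuous there, so after enlarging $K_1$ the inequality is trivial in that range; hence we may assume $|\xi|\ge R$ for a constant $R=R(\gamma,\epsilon,\beta,m)$ to be fixed. (We assume for simplicity the strict inequality $2\epsilon<(\beta m^2)^{1-\gamma}$; in the borderline equality the $k=0$ factor degenerates near $\xi=0$ and one simply runs the argument below for $|\xi|\ge 1$, where that factor is harmless.)

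Fix $|\xi|\ge R$ and split $\Z=S\cup S^{c}$ with $S:=\{k\in\Z:\theta_k\le|\xi|\}$. For $k\in S$ one has $b_k^2=4\xi^2/\theta_k^2\ge 4$, hence $a_k^2+b_k^2\ge 4$ and $\log(a_k^2+b_k^2)\ge\log 4$. Since $\theta_k=\beta\bigl((2\pi k/L)^2+m^2\bigr)$, the spectral counting function satisfies $\#\{k:\theta_k\le t\}\sim \mathrm{const}\cdot\sqrt t$, so $|S|\ge c_1\sqrt{|\xi|}$ for $|\xi|$ large, with $c_1=c_1(\beta,L,m)>0$; this is the mechanism producing the exponent $1/2$ in \eqref{decay}. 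For $k\in S^{c}$ one has $\theta_k>|\xi|\ge R$, so choosing $R$ large enough that $2\epsilon/R^{1-\gamma}\le\tfrac12$ forces $a_k\ge\tfrac12$, whence $\log(a_k^2+b_k^2)\ge 2\log a_k\ge -8\epsilon\,\theta_k^{-(1-\gamma)}$ (using $\log(1-x)\ge -2x$ for $0\le x\le\tfrac12$). Summing over the tail,
\[
\sum_{k\in S^{c}}\log(a_k^2+b_k^2)\ \ge\ -8\epsilon\sum_{k\in\Z}\theta_k^{-(1-\gamma)}\ =:\ -C_0\ >\ -\infty,
\]
the finiteness being exactly the place where $\gamma<1/2$ is used, since then $\theta_k^{1-\gamma}\gtrsim |k|^{2-2\gamma}$ with $2-2\gamma>1$. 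Combining the two estimates gives $\sum_{k\in\Z}\log(a_k^2+b_k^2)\ge(c_1\log 4)\sqrt{|\xi|}-C_0$, i.e.
\[
\Re\log F(\xi,\epsilon)\ \le\ \frac{C_0}{2}-\frac{c_1\log 4}{2}\sqrt{|\xi|},
\]
and since $\sqrt{|\xi|}\ge\sqrt{(|\xi|-1)_+}$ this is of the asserted form with $K_2=\tfrac12 c_1\log 4$ and $K_1$ chosen to absorb $C_0/2$ together with the bounded-$\xi$ range.

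The computation is essentially routine; the two points that require care are the uniform control of the high-frequency tail $S^{c}$ — which is precisely what forces the hypothesis $\gamma<1/2$ — and the choice of the splitting scale $\theta_k\sim|\xi|$, dictated by wanting $|b_k|\gtrsim 1$ on $S$ and simultaneously $a_k\approx 1$ off $S$; once that scale is fixed, the sharp exponent $1/2$ is just the Weyl asymptotics of $-\Delta$ on $\T$. The only genuine subtlety, and it is a mild one, is the behavior near $\xi=0$ in the borderline case $2\epsilon=(\beta m^2)^{1-\gamma}$, handled as indicated above.
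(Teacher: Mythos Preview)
Your proof is correct and follows essentially the same route as the paper's: write $\Re\log F=-\tfrac12\sum_k\log(a_k^2+b_k^2)$, split according to whether $\theta_k\le|\xi|$, use the Weyl count $\#\{k:\theta_k\le|\xi|\}\sim c\sqrt{|\xi|}$ on the low-frequency piece to produce the $\sqrt{|\xi|}$ gain, and control the high-frequency tail via summability of $\theta_k^{-(1-\gamma)}$ (which is exactly where $\gamma<1/2$ enters). The only cosmetic difference is that the paper first factors $a_k^2+b_k^2=a_k^2\bigl(1+b_k^2/a_k^2\bigr)$ and treats the two resulting sums separately, whereas you split the index set directly; the content is the same.
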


\begin{proof} 
For any $j\in \mathbb{Z}$, we replace $\theta_j$ by an arbitrary complex variable $z_j$. We then define
 $f(\xi,\epsilon) := \log F(\xi,\epsilon)|_{\theta_j \rightarrow z_j}$, i.e.,
$$f(\xi,\epsilon) = - \sum_{j\in 2\pi L^{-1}\Z} \ln \left[ 1 -2\frac{i\xi + \epsilon z_j^\gamma}{z_j}\right]\ .$$
Since $\gamma < 1/2$, the series is summable. For real $z_j$,
\begin{eqnarray}
-\Re f(\xi)  &=&
\frac12 \sum_{j\in 2\pi L^{-1}\Z} \ln  \left[ (1 -2\epsilon z_j^{\gamma-1})^2 + 4\frac{ 
 \xi^2}{z_j^2}\right]\nonumber\\
&=&  \sum_{j\in 2\pi L^{-1}\Z} \ln (1 - 2\epsilon z_j^{\gamma-1})
+\frac12 \sum_{j\in 2\pi L^{-1}\Z} \ln  \left[ 1  + 4\frac{ 
 \xi^2}{z_j^2 (1 -2\epsilon z_j^{\gamma-1})^2} \right]
\nonumber
\end{eqnarray}
As long as $2\epsilon \leq (\beta m^2)^{1-\gamma}$, $1 - 2\epsilon z_j^{\gamma-1}> 0$. Then each logarithm in the sum on the right is positive, 
and
$$
\ln  \left[ 1  + 4\frac{ 
 \xi^2}{z_j^2 (1 -2\epsilon z_j^{\gamma-1})^2} \right]
  \geq \ln  \left[ 1  + 4\frac{ 
 \xi^2}{z_j^2 } \right]
\ .$$
Therefore, 
\begin{eqnarray}
%\sum_{j\in 2\pi L^{-1}\Z} \ln  \left[ 1  + 4\frac{\epsilon^2 (m^2+j^2)^{2\gamma} + \xi^2}{(m^2+j^2)^2 -2\epsilon(m^2+j^2)^{1+\gamma}} \right]
\sum_{j\in 2\pi L^{-1}\Z} \ln  \left[ 1  + 4\frac{ 
 \xi^2}{z_j^2 (1 -2\epsilon z_j^{\gamma-1})^2} \right]
&\geq &
\sum_{j\in 2\pi L^{-1}\Z} \ln  \left[ 1  + 4\frac{ \xi^2}{z_j^2 } \right]\nonumber\\
&\geq& 
\sum_{j\in L^{-1}\Z, z_j \leq |\xi|} \ln  \left[ 1  + 4\frac{ \xi^2}{z_j^2 } \right]\nonumber\\
&\geq& K_2\sqrt{(|\xi |-1)_+}\ .\nonumber
\end{eqnarray}

The final estimate is valid since there are $\mathcal{O}(\sqrt{(|\xi |-1)_+})$ terms in the final sum, and each is at least as large as $\ln 5$. 

Since the remaining sum is convergent, there is a constant $K_1$ depending only on the indicated quantities, in  particular,  independent of $\xi$, so that (\ref{decay}) is valid. 
\end{proof}
We are now ready to construct the measures $\nu_{n,\beta}$ and to study their support. We recall that, for $\ell\in \N$, 
 the {\em Fejer kernel} $K_\ell(t)$ is defined on $[-\pi,\pi]$ by
 $$K_\ell(t) = \frac{1}{\ell}\sum_{k=0}^{\ell-1}\left(\sum_{j=-k}^k e^{ijt}\right) = \frac{1}{\ell}\frac{ 1- \cos(\ell t)}{1-\cos t}\ ,$$ 
 see, e.g. \cite{Z}.
 % I would have preferred to regularize the delta function with Gaussians, rather than with the Fejer kernel!
% But I won't worry about this, right now.%
We extend $K_\ell(t)$ to all of $\R$ by setting $K_\ell(t) = 0$, for $|t|> \pi$. 
  As is well known, $K_\ell(t) \geq 0$, for all $t$, for every $\ell$, $\int_{-\pi}^\pi K_\ell(t)\dd t = 1$, 
 and, for all $\delta > 0$, 
 $$
 \lim_{\ell \to\infty} \int_{[-\pi,-\delta]\cup[\delta,\pi]}K_\ell(t)\dd t =0 \ .
 $$
  For $n>0$, $\ell\in \N$ and $\beta>0$, we then define the probability measure $\nu_{n,\beta,\ell}$ by
\begin{equation} \label{numeas}
\nu_{n,\beta,\ell} =   
         \left(\int_\Omega K_\ell(n - N(\phi))\dd \mu_\beta\right)^{-1} K_\ell(n - N(\phi))\dd \mu_\beta\ .
\end{equation}
By what we have noted above,
$$\lim_{\ell\to\infty}  \int_\Omega K_\ell(n - N(\phi))\dd \mu_\beta  = \rho_N(n)$$
where $\rho_N$ is the strictly positive density of $N(\phi)$.   Next,
$$ \left(\int_\Omega K_\ell(n - N(\phi))\dd \mu_\beta\right) \int_\Omega  e^{\epsilon\|(m^2-\Delta)^\gamma\phi\|_2^2}{\rm d}\nu_{n,\beta,\ell} = 
\frac{1}{\ell}\sum_{k=0}^{\ell-1}\left(\sum_{j=-\ell}^\ell e^{-ijn}F(j,\epsilon)\right)\ .$$
The rapid decay of $F(\xi,\epsilon)$ in $\xi$, for fixed $\epsilon$, exhibited in Lemma~\ref{rapid} shows that 
there is a constant $K< \infty$ independent of $\ell$ and $n$ such that 
$$ \int_\Omega  e^{\epsilon\|(m^2-\Delta)^\gamma\phi\|_2^2}{\rm d}\nu_{n,\beta,\ell} \leq K\ .
$$

H\"older norms are useful at several places in our work.
Let $\psi$ be a continuous  real-valued periodic function on $[0,L]$. For $0 < \alpha < 1$, $\psi$ is H\"older-continuous of order $\alpha$ if
$$\sup_{0\leq x<y \leq L}\frac{|\psi(y) - \psi(x)|}{(y-x)^\alpha} =   \|\psi\|_\ha < \infty\ .$$
Then $\|\psi\|_\ha$ is the order-$\alpha$ H\"older semi-norm of $\psi$. 

Uniformly  bounded sets in $\Omega:= \mathcal{C}(\T)$ on which some H\"older norm 
is bounded are compact, by the Arzela-Ascoli Theorem, and 
thus one way to prove that a family of Borel probability measures on $\Omega$ is {\em tight}, so that a limit point exists,  is 
to establish uniform bounds on the rate at which  the probability of $\{\phi\ :\ \|\phi\|_\ha > t\}$ tends 
to zero, as $t$ tends to $\infty$.
The bounds on integrals of exponentials of fractional Sobolev norms 
proven above readily imply tightness  via
the following standard Sobolev Embedding Theorem 
for functions on the one-dimensional torus:

\begin{lm}[Sobolev Embedding] For all $0 < \alpha < 2\gamma - 1/2$, there is a universal constant $C_{\alpha,\gamma}$  such that, for all 
functions $\phi$ on the torus in the domain of the operator $(-\Delta)^{\gamma}$,
$$
\|\phi\|_\ha \leq C_{\alpha,\gamma} 
\|(-\Delta)^\gamma \phi\|_2 \ .
$$
\end{lm}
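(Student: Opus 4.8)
The plan is to prove this by a direct Fourier-series estimate on the one-dimensional torus. First I would reduce to the case where $\phi$ has vanishing mean value: adding a constant changes neither side of the asserted inequality (the H\"older semi-norm $\|\cdot\|_\ha$ annihilates constants, and $(-\Delta)^\gamma$ kills the zero Fourier mode), so there is no loss of generality. Writing $\phi = \sum_{k\neq 0} c_k u_k$ with $u_k(x) = L^{-1/2}e^{2\pi i kx/L}$ as above, the Parseval identity gives $\|(-\Delta)^\gamma \phi\|_2^2 = (2\pi/L)^{4\gamma}\sum_{k\neq 0}|k|^{4\gamma}|c_k|^2$.

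Next, for $0\le x<y\le L$ I would bound $|\phi(y)-\phi(x)| \le \sum_{k\neq 0}|c_k|\,|u_k(y)-u_k(x)|$ and use the elementary inequality $|e^{i\theta_1}-e^{i\theta_2}| \le \min\{2,|\theta_1-\theta_2|\} \le 2^{1-\alpha}|\theta_1-\theta_2|^\alpha$, valid for $0\le\alpha\le 1$, applied with $\theta_j = 2\pi k x_j/L$. This yields $|u_k(y)-u_k(x)| \le 2^{1-\alpha}L^{-1/2}(2\pi/L)^\alpha |k|^\alpha |y-x|^\alpha$, and hence $\|\phi\|_\ha \le 2^{1-\alpha}L^{-1/2}(2\pi/L)^\alpha \sum_{k\neq 0}|k|^\alpha|c_k|$.

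The core step is then a single application of the Cauchy--Schwarz inequality: $\sum_{k\neq 0}|k|^\alpha|c_k| = \sum_{k\neq 0}|k|^{\alpha-2\gamma}\cdot|k|^{2\gamma}|c_k| \le \big(\sum_{k\neq 0}|k|^{2\alpha-4\gamma}\big)^{1/2}\big(\sum_{k\neq 0}|k|^{4\gamma}|c_k|^2\big)^{1/2}$. The first factor is finite precisely when $2\alpha-4\gamma<-1$, i.e. when $\alpha<2\gamma-1/2$ --- this is the one and only place where the hypothesis on $\alpha$ enters. The second factor equals $(L/2\pi)^{2\gamma}\|(-\Delta)^\gamma\phi\|_2$ by the Parseval identity recorded above. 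Collecting the constants produces the claimed $C_{\alpha,\gamma}$ (which, to be precise, also depends on the fixed circumference $L$), and this completes the proof.

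I do not expect a serious obstacle here: the statement is the classical embedding $H^{2\gamma}(\T)\hookrightarrow C^{0,\alpha}(\T)$ for $\alpha<2\gamma-1/2$ in one dimension, and the argument sketched above is the standard elementary proof of it. The only points requiring care are the bookkeeping in the mean-zero reduction and in the $L$-dependent normalization constants, together with making explicit that summability of the $p$-series $\sum_{k\neq 0}|k|^{2\alpha-4\gamma}$ is exactly the sharp condition $\alpha<2\gamma-1/2$ appearing in the hypothesis.
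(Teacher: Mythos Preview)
Your argument is correct and is the standard elementary proof of the embedding $H^{2\gamma}(\T)\hookrightarrow C^{0,\alpha}(\T)$ for $\alpha<2\gamma-\tfrac12$ in one dimension. The paper itself does not give a proof of this lemma; it is merely quoted as ``the following standard Sobolev Embedding Theorem'' and used as an input, so there is no approach in the paper to compare against. Your reduction to mean zero, the interpolation bound $|e^{i\theta_1}-e^{i\theta_2}|\le 2^{1-\alpha}|\theta_1-\theta_2|^\alpha$, and the Cauchy--Schwarz step isolating the summability condition $2\alpha-4\gamma<-1$ are all fine; the only minor remark is that, as you note yourself, the constant $C_{\alpha,\gamma}$ you obtain carries an $L$-dependence, which is harmless here since $L$ is fixed throughout the paper.
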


This shows that, for all $\alpha < 1/2$, there is an $\epsilon>0$ and a constant $K < \infty$ such that 
$$
 \int_\Omega  e^{\epsilon\|\phi\|_\ha^2}{\rm d}\nu_{n,\beta,\ell} \leq K\ ,
$$
uniformly in $\ell$. Thus, the family of probability measure $\{\nu_{n,\beta,\ell}\}_{\ell\in \N}$ is tight. A
simple computation shows that, for any $h \in L^2(\T)$,
$\lim_{\ell\to\infty} \int_\Omega e^{\langle h,\phi\rangle} {\rm d}\nu_{n,\beta,\ell}$
exists, and hence the sequence $\{\nu_{n,\beta,\ell}\}_{\ell\in \N}$ has a unique limit point. 
We may therefore define a probability measure $\nu_{n,\beta}$ by
$$
\nu_{n,\beta} = \lim_{\ell\to\infty}  \nu_{n,\beta,\ell}\ .
$$
By construction, 
\begin{equation}\label{disintegration}
\mu_{\beta} = \int_0^\infty \rho_N(n) \nu_{n,\beta} \dd n
\end{equation}
is the disintegration of $\mu_\beta$ obtained by conditioning on subsets on which $N(\phi)$ has a fixed value.  We have proven the following lemma.

\begin{lm} Let $\nu_{n,\beta}$ be the law of $\phi$ under  $\mu_{\beta}$ conditional on $N(\phi)= n$. 
Let $0 \leq \gamma < 1/2$. 
Then there is an $\epsilon> 0$  and a constant $K< \infty$,  
depending only $\epsilon$  and $\gamma$, such that 
$$ \int_\Omega  e^{\epsilon\|(m^2-\Delta)^\gamma\phi\|_2^2}{\rm d}\nu_{n,\beta} \leq K\ .
$$
\end{lm}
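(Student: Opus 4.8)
The plan is to obtain the estimate for $\nu_{n,\beta}$ simply by passing to the limit $\ell\to\infty$ in the corresponding bound for the approximants $\nu_{n,\beta,\ell}$. By the discussion preceding the Sobolev Embedding Lemma — namely the identity expressing $\big(\int_\Omega K_\ell(n-N(\phi))\dd\mu_\beta\big)\int_\Omega e^{\epsilon\|(m^2-\Delta)^\gamma\phi\|_2^2}\dd\nu_{n,\beta,\ell}$ as a Ces\`aro-type average of the numbers $e^{-ijn}F(j,\epsilon)$, $|j|\le\ell$, combined with the absolute summability $\sum_{j\in\Z}|F(j,\epsilon)|<\infty$ that follows from Lemma~\ref{rapid}, and the fact that $\int_\Omega K_\ell(n-N(\phi))\dd\mu_\beta\to\rho_N(n)>0$ — there is a finite constant $K$, independent of $\ell$, with $\sup_\ell\int_\Omega e^{\epsilon\|(m^2-\Delta)^\gamma\phi\|_2^2}\dd\nu_{n,\beta,\ell}\le K$, valid for any $\epsilon$ satisfying $2\epsilon\le(\beta m^2)^{1-\gamma}$. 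So everything reduces to passing this bound through the weak limit $\nu_{n,\beta,\ell}\to\nu_{n,\beta}$.

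The core of the proof is precisely this passage to the limit, and I would handle it by exploiting lower semicontinuity, rather than continuity, of the exponential functional on $\Omega=\mathcal{C}(\T)$. Indeed, uniform convergence $\phi_k\to\phi$ forces $\langle u_j,\phi_k\rangle\to\langle u_j,\phi\rangle$ for every mode $j$, so applying Fatou's lemma termwise to $\|(m^2-\Delta)^\gamma\phi\|_2^2=\sum_{j\in\Z}\big((2\pi j/L)^2+m^2\big)^{2\gamma}\,|\langle u_j,\phi\rangle|^2$ shows that $\phi\mapsto\|(m^2-\Delta)^\gamma\phi\|_2^2\in[0,\infty]$ is lower semicontinuous on $\Omega$; equivalently, each sublevel set $\{\phi:\|(m^2-\Delta)^\gamma\phi\|_2^2\le R\}$ is closed. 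Hence $G(\phi):=e^{\epsilon\|(m^2-\Delta)^\gamma\phi\|_2^2}$ is lower semicontinuous, and so is the bounded function $G\wedge M$ for each $M>0$. Since $\nu_{n,\beta,\ell}\to\nu_{n,\beta}$ weakly on $\Omega$ (established above: tightness from the H\"older-norm bound together with Arzel\`a--Ascoli, and uniqueness of the limit from convergence of $\int_\Omega e^{\langle h,\phi\rangle}\dd\nu_{n,\beta,\ell}$ for all $h\in L^2(\T)$), the portmanteau theorem for bounded lower semicontinuous functions gives $\int_\Omega(G\wedge M)\dd\nu_{n,\beta}\le\liminf_\ell\int_\Omega(G\wedge M)\dd\nu_{n,\beta,\ell}\le K$ for every $M$; letting $M\to\infty$ and invoking monotone convergence then yields $\int_\Omega G\dd\nu_{n,\beta}\le K$, which is the assertion.

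I do not anticipate a genuine obstacle; the two points that deserve a word are both covered by what precedes. First, one should make sure that the weak limit $\nu_{n,\beta}$ really is the conditional law of $\phi$ given $N(\phi)=n$, i.e. that the construction produces the disintegration $\mu_\beta=\int_0^\infty\rho_N(n)\,\nu_{n,\beta}\,\dd n$; this is checked by testing against a product of a bounded continuous cylinder function of $\phi$ and a continuous function of $N(\phi)$, using that $K_\ell(n-\cdot)\,\dd n$ is an approximate identity and that $\rho_N$ is continuous. Second, the interchange of the limit in $\ell$ with the (possibly infinite) integral is exactly what the truncation step addresses: passing $G\wedge M$ through the limit for each fixed $M$, and only then sending $M\to\infty$, uses no uniform-integrability input beyond the $\ell$-uniform bound above, and — in contrast to a cruder argument relying only on closedness of the sublevel sets together with Chebyshev and the layer-cake formula, which would recover the estimate with $\epsilon$ weakened to some $\epsilon'<\epsilon$ — it delivers the full constant $\epsilon$.
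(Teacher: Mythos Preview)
Your proposal is correct and follows essentially the same approach as the paper: the paper's proof is the entire discussion preceding the lemma statement, which establishes the uniform-in-$\ell$ bound $\int_\Omega e^{\epsilon\|(m^2-\Delta)^\gamma\phi\|_2^2}\dd\nu_{n,\beta,\ell}\le K$ via Lemma~\ref{rapid}, proves tightness via Sobolev embedding and Arzel\`a--Ascoli, identifies the weak limit $\nu_{n,\beta}$, and then simply declares the lemma proven. You follow the identical route but are more explicit about the one step the paper leaves implicit, namely the passage of the exponential bound through the weak limit via lower semicontinuity of $\phi\mapsto\|(m^2-\Delta)^\gamma\phi\|_2^2$ on $\mathcal C(\T)$, truncation $G\wedge M$, portmanteau, and monotone convergence; this is the natural way to fill that gap and is entirely standard.
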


\subsection{Construction of the canonical Gibbs measures}

As mentioned in the introduction, the canonical Gibbs measures constructed and studied in this paper are given, formally, by
\begin{equation}\label{gibbs}
\frac{1}{Z} \delta(N(\phi) - n) e^{-\beta H(\phi)} \mathcal{D} \phi  \mathcal{D} \overline{\phi}\ ,
\end{equation}
and making sense of this expression boils down to showing integrability of 
$e^{(\beta \lambda /p) \|\phi\|_p^p}$ with respect to the  reference measure  $\nu_{n,\beta}$. Here we present a proof of this property for all $\lambda > 0$, assuming that $p < 6$.

The  integrability of $e^{(\beta \lambda /p) \|\phi\|_p^p}$, $\lambda >0$,  was first studied in \cite{LRS}. 
In this paper, a {\em generalized canonical ensemble} has been considered in which the $\delta$-function in (\ref{nunbf})
 is replaced by the characteristic function of the subset $\{\phi\ :\ N(\phi) \leq n\}$ of $\Omega$, and the normalization is adjusted appropriately. It is shown in \cite{LRS} that, for $p< 6$,
integrability holds for all $\lambda>0$, and that, for $p=6$, integrability holds for all $\lambda < \Lambda_0$, for some $\Lambda_0> 0$.  There is however a gap in the proof for the case in which $\phi$ is periodic with period $L$, i.e., $\phi$ is defined on $\T$. The proof in \cite{LRS} is complete for the case where $\phi$ is defined on the interval $[0,L]$
with boundary condition $\phi(0)= 0$ (or any other constant) , and with $\phi(L)$ free. The proof described here covers the case of periodic boundary conditions, but
does not apply when $p=6$, no matter how small $\lambda$ is chosen.

The next lemma bounds the sup-norm, $\|\psi\|_\infty$, of a function $\psi$ by a geometric mean of 
the $L^{2}$-norm $\|\psi\|_2$ and the H\"older semi-norm $\|\psi\|_\ha$ in which the power of $\|\psi\|_\ha$ is greater than $1/2$,
but may be brought arbitrarily close to $1/2$ by choosing $\alpha$ sufficiently close to $1/2$.

\begin{lm}\label{linf} For all $\psi$ with $ \|\psi\|_\ha < \infty$,
$$\|\psi\|_\infty \leq   \frac{\|\psi\|_2}{\sqrt{L}} + 2\|\psi\|_2^{2\alpha/(2\alpha+1)} \|\psi\|_\ha^{1/(2\alpha+1)} \ .$$
\end{lm}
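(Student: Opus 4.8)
The plan is to compare $\psi$ to its mean and control the oscillation via the Hölder semi-norm. Write $\bar\psi = L^{-1}\int_{\T}\psi(x)\,\dd x$ for the average value of $\psi$ on the circle. Since $\|\psi\|_\infty \leq |\bar\psi| + \|\psi - \bar\psi\|_\infty$ and $|\bar\psi| \leq L^{-1/2}\|\psi\|_2$ by Cauchy--Schwarz, it suffices to bound $\|\psi-\bar\psi\|_\infty$ by the stated geometric-mean expression. So from now on I may as well assume $\bar\psi = 0$, in which case there is a point $x_0 \in \T$ at which $\psi(x_0) = 0$ (the mean-value theorem for integrals, since a continuous function with zero average must vanish somewhere).

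Now fix an arbitrary $x \in \T$ and let $d = \mathrm{dist}(x,x_0) \leq L/2$. On the one hand, by the Hölder bound, $|\psi(x)| = |\psi(x) - \psi(x_0)| \leq \|\psi\|_\ha\, d^\alpha$, which is good when $d$ is small. On the other hand, if $d$ is not small, $\psi$ cannot be large on the whole interval of length $d$ between $x_0$ and $x$ without $\|\psi\|_2$ being large: more precisely, for any $y$ in that interval, $|\psi(y)| \geq |\psi(x)| - \|\psi\|_\ha |x-y|^\alpha \geq |\psi(x)| - \|\psi\|_\ha d^\alpha$, so integrating over a subinterval of length $\rho \leq d$ around $x$ on which $|\psi| \geq |\psi(x)|/2$ (which holds as soon as $\|\psi\|_\ha \rho^\alpha \leq |\psi(x)|/2$) gives $\|\psi\|_2^2 \geq \rho\,|\psi(x)|^2/4$. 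Choosing $\rho = \bigl(|\psi(x)|/(2\|\psi\|_\ha)\bigr)^{1/\alpha}$ — provided this is $\leq d$, and in fact $\leq L/2$, which is the case we are in — yields $\|\psi\|_2^2 \geq \tfrac14 (2\|\psi\|_\ha)^{-1/\alpha} |\psi(x)|^{2 + 1/\alpha}$, i.e. $|\psi(x)| \leq C\, \|\psi\|_2^{2\alpha/(2\alpha+1)}\|\psi\|_\ha^{1/(2\alpha+1)}$ with an explicit constant. The remaining case $\rho > L/2$ (i.e. $|\psi(x)|$ so small relative to $\|\psi\|_\ha$ that the interval would wrap around) forces $|\psi(x)| \leq 2\|\psi\|_\ha (L/2)^\alpha$, and one checks this too is dominated by the geometric-mean term, or even more simply one just optimizes the two bounds $|\psi(x)| \leq \|\psi\|_\ha d^\alpha$ and $|\psi(x)|^{2+1/\alpha} \leq 4(2\|\psi\|_\ha)^{1/\alpha}\|\psi\|_2^2$ over $d$ directly.

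Cleaner to organize: for the fixed $x$, set $t := |\psi(x)|$. From the integral estimate above (which needs no hypothesis beyond $\rho \leq L/2$), $t^{2+1/\alpha} \leq 4\,(2\|\psi\|_\ha)^{1/\alpha}\|\psi\|_2^2$, hence $t \leq 4^{\alpha/(2\alpha+1)} \cdot 2^{1/(2\alpha+1)} \|\psi\|_2^{2\alpha/(2\alpha+1)}\|\psi\|_\ha^{1/(2\alpha+1)}$; tracking constants shows the coefficient is at most $2$, which is the claimed form. In the degenerate regime where the requisite subinterval would have length exceeding $L/2$, one uses instead $t \leq \|\psi\|_\ha (L/2)^\alpha$ together with the lower bound this regime imposes on $\|\psi\|_2$ in terms of $t$, again absorbing everything into the same two terms. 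Taking the supremum over $x$ and adding back $|\bar\psi| \leq L^{-1/2}\|\psi\|_2$ gives the inequality as stated.

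The main obstacle is purely bookkeeping: getting the numerical constant down to the clean value $2$ in front of the interpolation term (and $1/\sqrt L$ in front of the $L^2$ term) requires being slightly careful about which subinterval of $\{|\psi|\geq t/2\}$ one integrates over and handling the wrap-around case on the circle, but there is no analytic difficulty — it is the standard Gagliardo--Nirenberg-type interpolation $\|\psi\|_\infty \lesssim \|\psi\|_2^{2\alpha/(2\alpha+1)}\|\psi\|_{C^{0,\alpha}}^{1/(2\alpha+1)}$ specialized to the one-dimensional torus, with constants made explicit.
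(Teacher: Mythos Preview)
Your argument is correct and yields the stated inequality with the exact constants (indeed, once you have reduced to $\bar\psi=0$, the existence of a zero $x_0$ forces $t\leq \|\psi\|_\ha (L/2)^\alpha$, so the ``degenerate regime'' $\rho>L/2$ never actually occurs and that paragraph can simply be dropped). The route, however, is genuinely different from the paper's. The paper does not subtract the mean or locate a zero; instead it uses Chebyshev's inequality: for any $M>\|\psi\|_2/\sqrt L$ the set $\{|\psi|\geq M\}$ has measure at most $\|\psi\|_2^2/M^2$, so the point where $|\psi|$ attains its maximum lies within distance $\|\psi\|_2^2/M^2$ of a point where $|\psi|<M$, whence $\|\psi\|_\infty\leq M+\|\psi\|_\ha(\|\psi\|_2^2/M^2)^\alpha$. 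Optimizing in $M$ subject to $M>\|\psi\|_2/\sqrt L$ (by taking $M=\|\psi\|_2/\sqrt L+(\|\psi\|_2^{2\alpha}\|\psi\|_\ha)^{1/(2\alpha+1)}$) gives the result. Your approach bounds $\|\psi\|_2$ from below via a plateau near the maximum; the paper's bounds $\|\psi\|_\infty$ from above via a nearby sub-level point. The paper's argument has the minor advantage of working verbatim for complex-valued $\psi$ (no zero is needed), while yours requires passing to $|\psi|$ or to real and imaginary parts; on the other hand, yours makes the interpolation structure (Gagliardo--Nirenberg) more transparent and avoids the somewhat artificial constrained optimization in $M$.
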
  

\begin{proof} By Chebychev's inequality, the set
$E_M := \{ x :  |\psi(x)| \geq M\}$
has Lebesgue measure no greater than $\|\psi\|_2^2/M^2$.  In particular, for $M > \|\psi\|_2/\sqrt{L}$, $E_M^c\neq \emptyset$,
and every point in $[0,L]$  lies within a distance of  $\|\psi\|_2^2/M^2$ of a point in $E_M^c$, or else $E_M$ would contain an interval of length greater than $\|\psi\|_2^2/M^2$, which is impossible.
Fix an $M > \|\psi\|_2/\sqrt{L}$. 

Next, choose $y$ such that 
$|\psi(y)| = \|\psi \|_\infty$ and then choose $x\in E_M^c$ such that $|x-y| \leq  \|\psi\|_2^2/M^2$. By the definition of the H\"older norm,
$$|\psi(y) - \psi(x)| \leq \|\psi\|_\ha |y-x|^\alpha \leq   \|\psi\|_\ha \left(\frac{\|\psi\|_2^2}{M^2}\right)^\alpha\ ,$$
and hence
$$|\psi(y)| \leq M +   \|\psi\|_\ha \left(\frac{\|\psi\|_2^2}{M^2}\right)^\alpha\ .$$
If $M> 0$ could be chosen freely we would choose it to be a multiple of $(\|\psi\|_2^{2\alpha}\|\psi\|_\ha)^{1/(2\alpha +1)}$.  Since we require
$M  > \|\psi\|_2/\sqrt{L}$, we choose
$$M = \frac{\|\psi\|_2}{\sqrt{L}} +  (\|\psi\|_2^{2\alpha}\|\psi\|_\ha)^{1/(2\alpha +1)}\ ,$$
and this yields the bound stated in the lemma. 
\end{proof} 

\begin{thm}[Existence of the canonical  Gibbs measure]
There exist finite constants $C_1$, $C_2$ and $C_3$ such that, for all $2 <  p < 6$, 
and for all $\lambda$  and all $n> 0$,
\begin{equation}\label{bnd2A}
\int_\Omega   e^{\frac{\lambda}{p} \int \|\phi\|_p^p } {\rm d}\nu_{n,\beta} < 
C_3 \exp\left(C_1\lambda n^{p/2} +  C_2\lambda^{(4\alpha p+2)/(4\alpha+4-p)}n^{(2\alpha p+2)/(4\alpha+4-p)}\right)\ .
\end{equation} 
\end{thm}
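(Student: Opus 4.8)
\emph{Proof proposal.} The plan is to push the estimate onto the exponential integrability of the H\"older semi-norm that has already been established for $\nu_{n,\beta}$ in Subsection~3.1. We may assume $\lambda>0$, since for $\lambda\le 0$ the integrand is bounded by $1$. The first step is to exploit that $\nu_{n,\beta}$, being the law of $\phi$ under $\mu_\beta$ conditioned on $N(\phi)=n$, is carried by the sphere $\{\,\phi:\|\phi\|_2^2=n\,\}$; this follows from $\nu_{n,\beta,\ell}\to\nu_{n,\beta}$ weakly together with the continuity of $N(\phi)=\|\phi\|_2^2$ on $\Omega=\mathcal{C}(\T)$ and the fact that $K_\ell$ is an approximate identity, via the portmanteau theorem. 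Hence $\|\phi\|_2^2=n$ holds $\nu_{n,\beta}$-a.s., and on the sphere the elementary interpolation inequality gives
$$\|\phi\|_p^p=\int_{\T}|\phi(x)|^{p-2}\,|\phi(x)|^2\,\dd x\ \le\ \|\phi\|_\infty^{\,p-2}\,\|\phi\|_2^2\ =\ n\,\|\phi\|_\infty^{\,p-2}\ .$$
Into the right-hand side I would insert Lemma~\ref{linf} with $\|\phi\|_2=\sqrt n$, i.e. $\|\phi\|_\infty\le L^{-1/2}\sqrt n+2\,n^{\alpha/(2\alpha+1)}\|\phi\|_\ha^{1/(2\alpha+1)}$, together with the crude bound $(a+b)^{p-2}\le c_p(a^{p-2}+b^{p-2})$ valid with $c_p=\max\{1,2^{p-3}\}$ for $2<p<6$.

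This produces, $\nu_{n,\beta}$-almost surely,
$$\frac{\lambda}{p}\|\phi\|_p^p\ \le\ C_1\,\lambda\,n^{p/2}\ +\ A\,\|\phi\|_\ha^{\,\theta}\ ,\qquad \theta:=\frac{p-2}{2\alpha+1}\ ,\qquad A:=c'\,\lambda\,n^{\,1+\alpha(p-2)/(2\alpha+1)}\ ,$$
with $C_1=c_p\,p^{-1}L^{-(p-2)/2}$ and $c'$ depending only on $p,\alpha$. The decisive observation is that $\theta<2$ precisely when $p<4\alpha+4$; since $\alpha$ may be chosen as close to $1/2$ as we wish — and $\alpha<1/2$ is exactly the range in which the exponential H\"older estimate of Subsection~3.1 holds — this covers every $p<6$. (Using Lemma~\ref{linf} with the linear power of $\|\phi\|_\ha$ instead would only give $\theta=p-2<2$, i.e. $p<4$; the sub-linear power $1/(2\alpha+1)$ is exactly what buys the larger range.) For $\theta<2$, Young's inequality with conjugate exponents $2/\theta$ and $2/(2-\theta)$ gives, for the fixed $\epsilon>0$ of the H\"older estimate,
$$A\,\|\phi\|_\ha^{\,\theta}\ \le\ \epsilon\,\|\phi\|_\ha^{2}\ +\ \tfrac{2-\theta}{2}\,\epsilon^{-\theta/(2-\theta)}\,A^{2/(2-\theta)}\ =\ \epsilon\,\|\phi\|_\ha^{2}\ +\ C_2\,\lambda^{2/(2-\theta)}\,n^{2b/(2-\theta)}\ ,$$
where $b=1+\alpha(p-2)/(2\alpha+1)=(1+\alpha p)/(2\alpha+1)$. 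Since $2-\theta=(4\alpha+4-p)/(2\alpha+1)$, a one-line computation gives $2b/(2-\theta)=(2\alpha p+2)/(4\alpha+4-p)$ (and $2/(2-\theta)=(4\alpha+2)/(4\alpha+4-p)$), the exponents of $n$ and $\lambda$ appearing in the statement.

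To conclude, I would exponentiate, integrate against $\nu_{n,\beta}$, and factor out the deterministic terms, so that the whole bound reduces to $\int_\Omega e^{\epsilon\|\phi\|_\ha^2}\,\dd\nu_{n,\beta}\le K$ with $K$ independent of $n$. This is exactly what the concluding Lemma of Subsection~3.1 supplies — there it is stated for $\|(m^2-\Delta)^\gamma\phi\|_2$ — once it is combined with the Sobolev Embedding Lemma and the mode-by-mode comparison $\|(-\Delta)^\gamma\phi\|_2\le\|(m^2-\Delta)^\gamma\phi\|_2$; one then takes $C_3:=K$. I expect the only genuinely delicate points to be: (i) confirming that $\nu_{n,\beta}$ really charges the sphere $\{N=n\}$ (so that the factor $\|\phi\|_2^2$ may be replaced by the constant $n$) and that the exponential H\"older bound passes to the weak limit $\nu_{n,\beta,\ell}\to\nu_{n,\beta}$ — both resting on lower semicontinuity of $\|\cdot\|_\ha$ and continuity of $N$ on $\mathcal{C}(\T)$ together with the portmanteau theorem; and (ii) keeping the power-counting honest so that the $n$-exponent comes out precisely as $(2\alpha p+2)/(4\alpha+4-p)$. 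Beyond this there is no analytic obstruction: the hypothesis $p<6$ enters only through the requirement $\theta<2$.
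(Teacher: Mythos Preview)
Your proof is correct and follows essentially the same route as the paper's: bound $\|\phi\|_p^p\le\|\phi\|_2^2\|\phi\|_\infty^{p-2}$, insert Lemma~\ref{linf}, observe that the resulting H\"older exponent $\theta=(p-2)/(2\alpha+1)$ is strictly less than $2$ for $p<6$ and $\alpha$ close to $1/2$, split via Young's inequality, and invoke the uniform exponential integrability of $\|\phi\|_\ha^2$ under $\nu_{n,\beta}$; the only cosmetic difference is that you substitute $\|\phi\|_2=\sqrt{n}$ before Young's inequality while the paper does so after. Note that your computed $\lambda$-exponent $2/(2-\theta)=(4\alpha+2)/(4\alpha+4-p)$ is actually the correct one --- the exponent $(4\alpha p+2)/(4\alpha+4-p)$ printed in the statement (and repeated in the paper's proof) is a typo, so your claim that these ``match the statement'' should be amended.
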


\begin{proof}
Lemma~\ref{linf}  yields the bound
 $$ 
 \|\phi\|_p^p \leq \|\phi\|_2^2 \|\phi\|_\infty^{p-2} \leq   \left(\frac{2}{\sqrt{L}}\right)^{p-2} \|\phi\|_2^p +  4^{p-2}
 \|\phi\|_2^{2(\alpha p+1)/(2\alpha +1)}  \|\phi\|_\ha^{(p-2)/(2\alpha +1)}\ .
 $$
 As long as $(p-2)/(2\alpha +1) < 2$, we can apply our bound on the integrability of the exponential of
 $\|\phi\|_{\ha}^{2}$. 
For any $p< 6$, we can choose $\alpha < 1/2$ such that
 this is the case.  We then apply  the arithmetic-geometric mean inequality 
$$
ab \leq \frac{1}{s'} (a/\epsilon)^{s'} + \frac{1}{s}(\epsilon b)^s,
$$
where $1/s'+1/s =1$ and $s,s'>1$,
with 
$a = \|\phi\|_2^{(2\alpha p+2)/(2\alpha +1)}$ and $b = \|\phi\|_\ha^{(p-2)/(2\alpha +1)}$. Choosing
$$
s= \frac{4\alpha+2}{p-2}, \qquad{\rm so \ that} \qquad s' = \frac{4\alpha+2}{4\alpha +4 - p} \ ,$$
we obtain
\begin{eqnarray}
\|\phi\|_2^{(2\alpha p+2)/(2\alpha +1)}  \|\phi\|_\ha^{(p-2)/(2\alpha +1)} &\leq&
\epsilon^{-(4\alpha+2)/(4\alpha+4-p)}\frac{4\alpha +4 - p}{4\alpha+2} \|\phi\|_2^{(4\alpha p+4)/(4\alpha+4-p)}\nonumber\\
&+&   \epsilon ^\frac{4\alpha+2}{p-2} \frac{4\alpha +4 - p}{4\alpha+2}  \|\phi\|_\ha^2\ .
\end{eqnarray}
By choosing $\epsilon$ sufficiently small we therefore arrive at a bound of the form
$$
\lambda \|\phi\|_p^p \leq C_1\lambda \|\phi\|_2^p  +  C_2\lambda^{(4\alpha p+2)/(4\alpha+4-p)}\|\phi\|_2^{(4\alpha p+4)/(4\alpha+4-p)} + \delta \|\phi\|_\ha^2\ ,
$$
for any $\delta>0$. 
Choosing $\delta$ so small that the exponential of $\delta\|\phi\|_\ha^2$ is integrable, we conclude that 
$$
\int_\Omega   e^{\frac{\lambda}{p} \int \|\phi\|_p^p } {\rm d}\nu_{n,\beta} \leq  \exp\left(C_1\lambda n^{p/2} +  C_2\lambda^{(4\alpha p+2)/(4\alpha+4-p)}n^{(2\alpha p+2)/(4\alpha+4-p)}\right)\int_\Omega   e^{\delta \|\phi\|_\ha^2 }  {\rm d}\nu_{n,\beta} \ .
$$
\end{proof}

We now turn to studying the generalized grand-canonical ensemble. Our main result is the following theorem the first part of which guarantees the existence of the generalized grand-canonical ensemble. 
The second part of the theorem will be used in our proof of the fact that the stochastic dynamics 
relaxes to a Gibbs measure exponentially fast.

\begin{thm}\label{intthm}  For all $2 <  p < 6$, 
and for all $\lambda$ and all $\kappa> 0$,
\begin{equation}\label{bnd2}
\int  e^{\frac{\lambda}{p} \|\phi\|_p^p -  \kappa \|\phi\|_2^{2r}} {\rm d}\mu_\beta < \infty\ ,
\end{equation} 
whenever 
\begin{equation}\label{mm}
r > \frac{2\alpha p +2}{4\alpha +4-p}.
\end{equation}
Furthermore,  for all $\mu \in \mathbb{R}$ and all $\kappa> 0$, and for all $\gamma < 1/8$, 
\begin{equation}\label{sgb2}
\int \exp\left[ \mu \left|\int_{\T} |\phi|^2 \overline {\phi } (m^2-\Delta)^\gamma \phi {\rm d}x\right|   - \kappa\|\phi\|_2^{2r}\right] {\rm d}\mu_\beta  < \infty
\end{equation}
whenever $r > 9$. 
\end{thm}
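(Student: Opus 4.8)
The plan is to reduce both assertions to the conditional measures $\nu_{n,\beta}$ via the disintegration $\mu_\beta = \int_0^\infty \rho_N(n)\,\nu_{n,\beta}\,\dd n$ established in \eqref{disintegration}, exploiting that on the support of $\nu_{n,\beta}$ one has $\|\phi\|_2^2 = n$, so that $\|\phi\|_2^{2r} = n^r$ is constant there and can be pulled out of the inner integral.

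\textbf{Part 1.} If $\lambda \le 0$ the integrand is bounded, so I assume $\lambda>0$. By \eqref{disintegration},
$$\int e^{\frac{\lambda}{p}\|\phi\|_p^p - \kappa\|\phi\|_2^{2r}}\dd\mu_\beta = \int_0^\infty \rho_N(n)\, e^{-\kappa n^r}\Big(\int_\Omega e^{\frac{\lambda}{p}\|\phi\|_p^p}\dd\nu_{n,\beta}\Big)\dd n.$$
I would then invoke the preceding theorem on existence of the canonical Gibbs measure to bound the inner integral by $C_3\exp\big(C_1\lambda n^{p/2} + C_2\lambda^{(4\alpha p+2)/(4\alpha+4-p)} n^{(2\alpha p+2)/(4\alpha+4-p)}\big)$. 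Since $(p-2)^2\ge 0$ is equivalent to $\frac{2\alpha p+2}{4\alpha+4-p}\ge \frac p2$, both $n$-exponents in this bound are $\le \frac{2\alpha p+2}{4\alpha+4-p}$; hence, whenever $r > \frac{2\alpha p+2}{4\alpha+4-p}$, the product $e^{-\kappa n^r}\cdot(\text{inner bound})$ is a bounded function of $n\in\mathbb{R}_+$ that decays to $0$ as $n\to\infty$, and the $n$-integral converges because $\rho_N$ is a probability density.

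\textbf{Part 2.} For $\mu\le 0$ the integrand is bounded, so I assume $\mu>0$; I also take $0\le\gamma<1/8$ (the case $\gamma<0$ is easier, as then $(m^2-\Delta)^\gamma$ is a bounded operator). Write $\mathcal{I}(\phi) = \int_{\T}|\phi|^2\overline\phi\,(m^2-\Delta)^\gamma\phi\,\dd x$. By \eqref{disintegration} it suffices to show $\int_\Omega e^{\mu|\mathcal{I}(\phi)|}\dd\nu_{n,\beta} \le K\, e^{C n^a}$ uniformly in $n$ for some $a<r$; then the $n$-integral against $\rho_N(n)e^{-\kappa n^r}$ converges. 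The pointwise bound $|\mathcal{I}(\phi)| \le \int_\T|\phi|^3|(m^2-\Delta)^\gamma\phi|\dd x \le \|\phi\|_\infty^2\,\|\phi\|_2\,\|(m^2-\Delta)^\gamma\phi\|_2$ is the starting point. I fix $\gamma'\in(\gamma,1/2)$ and $\alpha\in(0,\min\{1/2,\,2\gamma'-1/2\})$, both close to $1/2$, and set $A:=\|(m^2-\Delta)^{\gamma'}\phi\|_2$. On the support of $\nu_{n,\beta}$ one has $\|\phi\|_2=\sqrt n$; interpolation of fractional powers (Hölder on the spectral side) gives $\|(m^2-\Delta)^\gamma\phi\|_2 \le A^{\gamma/\gamma'} n^{(1-\gamma/\gamma')/2}$; and Lemma~\ref{linf} together with the Sobolev embedding lemma (using $\|(-\Delta)^{\gamma'}\phi\|_2\le A$) gives $\|\phi\|_\infty \le C(\sqrt n + n^{\alpha/(2\alpha+1)}A^{1/(2\alpha+1)})$. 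Substituting, $|\mathcal{I}(\phi)|$ is bounded by a sum of two terms of the form $C n^{b}A^{c}$; the larger of the two $A$-exponents is $c=\frac{2}{2\alpha+1}+\frac{\gamma}{\gamma'}$, which is strictly below $2$ once $\alpha$ and $\gamma'$ are chosen close enough to $1/2$ (here is where $\gamma<1/8$ enters, forcing $\gamma/\gamma'<1/4$), and the accompanying exponent $b$ is bounded. Young's inequality then gives $\mu\,n^b A^c \le \delta A^2 + C_\delta\, n^{2b/(2-c)}$ for any $\delta>0$. Taking $\delta$ small enough that $e^{2\delta A^2}$ is $\nu_{n,\beta}$-integrable uniformly in $n$ — which holds since $\gamma'<1/2$, by the lemma giving $\int_\Omega e^{\epsilon\|(m^2-\Delta)^{\gamma'}\phi\|_2^2}\dd\nu_{n,\beta}\le K$ — and applying Hölder's inequality in $\nu_{n,\beta}$ to split $e^{2\delta A^2}\cdot e^{Cn^a}$, one gets the desired bound with $a$ the degree of the resulting polynomial in $n$. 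A direct check shows $\alpha,\gamma'$ can be chosen so that $a<9$ (in fact comfortably smaller), so $r>9$ suffices, and the proof is complete.

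\textbf{The main obstacle} is purely the bookkeeping in Part 2: one must distribute powers in the pointwise estimate so that $A=\|(m^2-\Delta)^{\gamma'}\phi\|_2$ — the only norm at our disposal whose square has merely Gaussian (not better) integrability under $\nu_{n,\beta}$ — enters to a power \emph{strictly} less than $2$, while all the remaining powers are absorbed into $\|\phi\|_2=\sqrt n$, which is harmless since it is constant on each sphere. The smallness $\gamma<1/8$ (giving $\gamma/\gamma'<1/4$) combined with $\tfrac{2}{2\alpha+1}\to 1$ as $\alpha\to\tfrac12$ is exactly what keeps that total $A$-exponent below $2$, and the generous threshold $r>9$ then dominates the polynomial growth in $n$ produced by Young's inequality with room to spare.
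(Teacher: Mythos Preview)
Your Part 1 is essentially identical to the paper's: both invoke the disintegration \eqref{disintegration} and the canonical bound \eqref{bnd2A}, and your observation that $(p-2)^2\ge 0$ forces $p/2\le (2\alpha p+2)/(4\alpha+4-p)$ just makes explicit why the single condition \eqref{mm} suffices.

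Your Part 2 is correct but organized differently from the paper. The paper's key trick is the \emph{doubling interpolation}
\[
\|(m^2-\Delta)^\gamma\phi\|_2^2 \le \|\phi\|_2\,\|(m^2-\Delta)^{2\gamma}\phi\|_2,
\]
which turns the bound into $|\mathcal I(\phi)|\le \|\phi\|_\infty^2\,\|\phi\|_2^{3/2}\,\|(m^2-\Delta)^{2\gamma}\phi\|_2^{1/2}$ and then applies a single three-factor Young inequality with the explicit exponents $(s_1,s_2,s_3)=(3/2,12,4)$; the $s_2=12$ term is exactly $\|\phi\|_2^{18}$ (whence $r>9$), the $s_3=4$ term is $\|(m^2-\Delta)^{2\gamma}\phi\|_2^2$ (whence the threshold $\gamma<1/8$ so that $2\gamma$ still lies in the exponentially integrable range), and the remaining $\|\phi\|_\infty^3$ is handled implicitly by the same Lemma~\ref{linf}/Sobolev machinery used in Part~1. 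You instead interpolate to an intermediate $\gamma'\in(\gamma,1/2)$, bound $\|\phi\|_\infty$ explicitly via Lemma~\ref{linf}, and carry the whole estimate through the disintegration over spheres $\{N(\phi)=n\}$. Your route is a bit longer in bookkeeping but has the advantage of being fully explicit about the $\|\phi\|_\infty$ contribution; the paper's route is shorter and makes transparent why the exact numbers $1/8$ and $9$ appear (they come straight from $s_3=4$ and $s_2=12$), at the cost of leaving the treatment of the $\|\phi\|_\infty^3$ term to the reader.
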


\begin{proof} Because of (\ref{disintegration}) and (\ref{bnd2A}),  we have (\ref{bnd2}) whenever (\ref{mm}) is satisfied.

We now turn to the quantity in (\ref{sgb2}), which we bound by
$$ \left|\int_{\T} |\phi|^2 \overline {\phi } (m^2-\Delta)^\gamma \phi {\rm d}x\right|  \leq   \|\phi\|_\infty^2 \|\phi\|_2 \|(m^2-\Delta)^\gamma\phi\|_2 \ .$$
Noticing that
 $$
 \|(m^2-\Delta)^\gamma \phi\|_2^2 = \langle \phi, (m^2-\Delta)^{2\gamma} \phi\rangle \leq \|\phi\|_2 \|(m^2-\Delta)^{2\gamma}\phi\|_2\ ,
 $$
we find that
\begin{equation}\label{swa}
 \left|\int_{\T} |\phi|^2 \overline {\phi } (m^2-\Delta)^\gamma \phi {\rm d}x\right|  \leq   \|\phi\|_\infty^2 \|\phi\|_2^{3/2} \|(m^2-\Delta)^{2\gamma}\phi\|_2^{1/2}  \ .
 \end{equation}

Now, as long as 
$4\gamma < \tfrac12$, $\|(m^2-\Delta)^{2\gamma}\phi\|_2 < \infty$, almost surely with respect to 
$\mu_\beta$.
There is therefore a $\delta' > 0$ such that
$$
\int e^{\delta'  \|(m^2-\Delta)^{2\gamma}\phi\|_2^2}{\rm d}\mu_\beta < \infty\ .
$$
We now apply the arithmetic-geometric mean inequality in the form
$$
abc \leq \frac{1}{s_1}(\epsilon a)^{s_1} +  \frac{1}{s_2}(\epsilon^{-2}b)^{s_2} + \frac{1}{s_3}(\epsilon c)^{s_3}\ ,
$$
where
$$\frac{1}{s_1} + \frac{1}{s_2}+ \frac{1}{s_3} =1\ .$$
The proof is completed by choosing $s_1 = 3/2$, $s_2 = 12$ and $s_3 = 4$.
\end{proof}

We note, in passing, that the techniques presented in this section can be used to give proofs of various 
well known regularity properties, such as H\"older continuity, of Wiener paths.

\section{Absence of ``phase transitions''}

In this section we study analyticity properties of the partiton functions of the measures constructed in the last section. 

The first thing to notice is that
the normalization factor (partition function) needed to normalize the reference measure $\mu_\beta$, i.e., the Gaussian measure with
mean $0$ and covariance $\beta^{-1}(m^2-\Delta)^{-1}$, is {\em independent of $\beta$}, up to a `constant factor' formally given by `$\left(\det[ \beta I]\right)^{1/2}$', which can be set to $1$ by a suitable redefinition of the formal Liouville measure $\mathcal{D}\phi\mathcal{D}\overline{\phi}$. 
This means that we may take the partition function in the generalized grand-canonical Gibbs measure to be given by
\begin{equation}\label{ggcanp}
Z(\kappa,\beta,\lambda) := \int_\Omega \exp\left[\beta\left(\frac{\lambda}{p}\|\phi\|_p^p + \frac{m^2}{2}\|\phi\|_2^2 - \frac{\kappa}{r}\|\phi\|_2^{2r}\right)\right]{\rm d}\mu_\beta\ ,
\end{equation}
replacing the formal expression involving an integral with respect to $\mathcal{D}\phi\mathcal{D}\overline{\phi}$, because there is no non-trivial
$\beta$- dependence in the constant that normalizes $\mu_\beta$. 

In order to determine the dependence of $Z$  on the parameters $\kappa$, $\beta$ and $\lambda$, we change variables in the functional integral \eqref{ggcanp}, setting
\begin{equation}\label{varch}
\sqrt{\beta}\phi = \psi\ .
\end{equation}

The image of the Gaussian $\mu_\beta$ under this change of variables is $\mu_1$, and hence
\begin{equation}\label{ggcanp2}
Z(\kappa,\beta,\lambda) = \int_\Omega \exp\left[\frac{\lambda \beta^{1-p/2}}{p}\|\phi\|_p^p + \frac{m^2}{2}\|\phi\|_2^2 - \frac{\beta^{1-r}\kappa}{r}\|\phi\|_2^{2r}\right]{\rm d}\mu_1\ .
\end{equation}
As long as 
\begin{equation}\label{domain}
\Re \beta^{1-r}\kappa > 0\ , \qquad \text{with} \text{   } \lambda \text{   } \text{arbitrary},
\end{equation}
our exponential integrability results show that $Z(\kappa,\beta,\lambda)$ is well defined.  Thus, Eq. (\ref{domain}) specifies a large domain of analyticity of the
partition function in the variables $\kappa, \beta$ and $\lambda$. It follows that the ``generalized pressure'', $\log( Z(\kappa,\beta,\lambda) )$, is analytic in its arguments on the subset of this domain where
$Z(\kappa,\beta,\lambda) \neq 0$. Since, for $\beta > 0$,  $\kappa>0$ and $\lambda$ real, 
$Z(\kappa,\beta,\lambda)\neq 0$, we obtain the following result.

\begin{thm}
The generalized pressure, $\log (Z(\kappa,\beta,\lambda))$, is analytic in $\kappa, \beta$ and $\lambda$ on a domain including the set of all $\kappa, \beta$ and $\lambda$ such that 
$\Re \beta$ and $\Re \kappa$ are strictly positive, and $|\Im \beta|$, $|\Im \kappa|$ and $|\Im \lambda|$ are sufficiently small (depending on $\Re \kappa$, $\Re \beta$ and $\Re \lambda$). 
\end{thm}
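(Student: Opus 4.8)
The plan is to deduce analyticity from the explicit representation \eqref{ggcanp2} together with the exponential integrability estimates of Theorem~\ref{intthm} (specifically \eqref{bnd2}), using Morera's theorem (equivalently, Hartogs' theorem on separate analyticity). The key point is that the integrand in \eqref{ggcanp2},
$$
G(\phi;\kappa,\beta,\lambda) := \exp\left[\frac{\lambda \beta^{1-p/2}}{p}\|\phi\|_p^p + \frac{m^2}{2}\|\phi\|_2^2 - \frac{\beta^{1-r}\kappa}{r}\|\phi\|_2^{2r}\right],
$$
is, for each fixed $\phi$, an entire function of the three complex parameters (the maps $\beta \mapsto \beta^{1-p/2}$ and $\beta\mapsto\beta^{1-r}$ being holomorphic on the right half-plane $\Re\beta > 0$, where we fix the branch by positivity on $\mathbb{R}_+$). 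So the only real work is to produce a \emph{locally uniform} integrable majorant that legitimizes differentiation under the integral sign.

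Concretely, I would fix a reference point $(\kappa_0,\beta_0,\lambda_0)$ with $\Re\beta_0 > 0$, $\Re\kappa_0 > 0$, and choose a small polydisc $P$ around it on which $\Re(\beta^{1-r}\kappa) \geq c > 0$ and $|\beta^{1-p/2}| \leq M$, $|\lambda| \leq M$, $\Re\beta \geq c$ — this is possible by continuity, which is the content of \eqref{domain}. On $P$ one then has the pointwise bound
$$
|G(\phi;\kappa,\beta,\lambda)| \leq \exp\left[\frac{M^2}{p}\|\phi\|_p^p + \frac{m^2}{2}\|\phi\|_2^2 - \frac{c}{r}\|\phi\|_2^{2r}\right],
$$
and, since $r$ has been taken large enough that \eqref{mm} holds (using $2<p<6$ and an appropriate $\alpha < 1/2$), the right-hand side is $\mu_\beta$-integrable by \eqref{bnd2}. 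Note the change of variables in \eqref{ggcanp2} replaced $\mu_\beta$ by $\mu_1$, and $\mu_1$ is exactly $\mu_\beta$ with $\beta = 1$, so Theorem~\ref{intthm} applies verbatim. With this dominating function in hand, $Z$ is continuous on $P$; and for any complex line through a point of $P$, the integrand is holomorphic and dominated, so by Morera (apply Fubini to interchange the contour integral over a small triangle with the $\mu_1$-integral, then use holomorphy of the integrand) $Z$ is holomorphic in each variable separately on $P$. Hartogs' theorem then gives joint analyticity. Since the choice of reference point was arbitrary within the region described, $Z$ is analytic there.

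Finally, to pass from analyticity of $Z$ to analyticity of $\log Z$, I invoke that $Z(\kappa,\beta,\lambda) > 0$ whenever $\beta > 0$, $\kappa > 0$ and $\lambda \in \mathbb{R}$ (the integrand is then real and strictly positive, and the integral is finite and nonzero), so $Z$ is nonvanishing on a neighborhood of the real slice by continuity; on that neighborhood $\log Z$ is a well-defined analytic function. I do not expect a serious obstacle here: the only point requiring a little care is the interchange of integration and contour integration needed for Morera, which is justified precisely by the locally uniform integrable majorant above (so Fubini applies), and the verification that the branch choices for the fractional powers of $\beta$ are consistent on the right half-plane. The estimate \eqref{sgb2} is not needed for this theorem; it is reserved for the analysis of the stochastic dynamics in Section~5.
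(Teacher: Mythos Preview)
Your proposal is correct and follows essentially the same route as the paper: pass to the representation \eqref{ggcanp2} against the fixed measure $\mu_1$, invoke the exponential integrability bound \eqref{bnd2} to produce a locally uniform majorant on the region \eqref{domain}, and then take the logarithm using strict positivity of $Z$ on the real slice. The paper is terser---it asserts analyticity directly from \eqref{domain} without spelling out the Morera/Fubini/Hartogs mechanism---but the detail you add is precisely what the paper leaves implicit.
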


Similar arguments can be used to prove analyticity properties of the ``generalized free energy'' in the canonical ensemble. We define the canonical partition function, $\tilde{Z}$, by 
\begin{equation}
\label{cpf}
\widetilde Z(n,\beta,\lambda) := e^{\frac{\beta}{2}m^{2}n} \int_{\Omega} \exp \beta\left[ \frac{\lambda}{p} \|\phi\|_p^p\right] {\rm d}\nu_{n,\beta}\ .
\end{equation}
Let $n = \beta^{-1} k$, with $\beta, k >0$. 
In the functional integral on the right side of \eqref{cpf} we change variables:
\begin{equation}
\label{varch}
 \psi:= \sqrt{\beta} \phi.
 \end{equation}
In the $\psi$-variables, the measure $\nu_{n,\beta}$ becomes $\nu_{k,1}$. Thus,
\begin{equation}
\label{cpf'}
\widetilde Z(n=\beta^{-1}k,\beta,\lambda) = e^{\frac{1}{2}m^{2}k} \int_{\Omega} \exp\left[\beta^{1- p/2} \frac{\lambda}{p} \Vert \psi \Vert^{p} \right] {\rm d} \nu_{k,1}
\end{equation}
The right side of this equation shows that $\widetilde{Z}(\beta^{-1}k,\beta,\lambda)$ is analytic in the variables $\beta \neq 0$ and $\lambda$. For $k<\infty$, $\beta > 0$ and $\lambda \in \mathbb{R}$, 
$\widetilde{Z}(\beta^{-1}k, \beta, \lambda)$ is strictly positive. Thus, the generalized free energy,
$\log( \widetilde{Z}(\beta^{-1}k,\beta,\lambda))$, is analytic in $\beta$ and $\lambda$ on a domain containing
all $\beta$ and $\lambda$, with $\Re\beta >0$ and $|\Im \beta|$, $|\Im \lambda|$ sufficiently small (depending on $\Re\beta$ and $\Re \lambda$).
 
 This result is similar to one proven in \cite{LRS} for the ``generalized canonical measure''.
(Further analyticity properties can be proven by also rescaling the $x$-variables and changing the diameter of $\mathbb{T}$.)

\section{The stochastic flow}

\subsection{The stochastic differential equation and its generator} 

For the purposes of this section it is useful to write the NLS equation in differential form:
\begin{equation}\label{de}
{\rm d}\phi = \{H, \phi\}{\rm d}t = J D H(\phi) {\rm d } t \ .
\end{equation}
As explained in the introduction, we wish to add a noise term on the right side of this equation with the property that a Gibbs measure is the unique invariant measure for the corresponding stochastic flow. 
Let $\sigma$ be a positive-definite operator on $\K$, and let $b$ be a vector field on $\K$. We
consider the stochastic differential equation (SDE)
\begin{equation}\label{sde}
{\rm d}\phi =   b(\phi){\rm d} t  + \sigma {\rm d}w\ ,
\end{equation}
%\begin{equation}\label{sde}
%{\rm d}\phi = \{\phi,H_\mu\}{\rm d}t + i \frac{\beta}{2}\sigma^2  \{\phi,H_\mu\}{\rm d}t   + \sigma {\rm d}w\ ,
%\end{equation}
where $\sigma$ is ``small'' (in a suitable sense), $b(\phi) = JDH(\phi) + O(\sigma^{2})$
is a small perturbation of the Hamiltonian vector field $JDH(\phi)$, and ${\rm d}w$ denotes a Wiener process on $\K$.  
We seek to choose $b$ such that a Gibbs measure is invariant under the stochastic flow
solving this SDE, and then we will attempt to control the
rate of relaxation to the  Gibbs measure for this process.

We intend to review some formulae that are direct analogs of counterparts 
in a finite-dimensional variant of this problem. 
It may clarify what we have to require of $b$ and of $\sigma$ and what we will need to estimate if we first investigate a finite-dimensional model problem. 
We do this by using dimension-independent methods that extend easily to the 
infinite-dimensional setting, so that very little of this analysis will need to be 
repeated.  Moreover, some of the results for the finite-dimensional case appear to be new 
and are of independent interest. 

\subsection{A finite-dimensional model: Invariant measures and drift vector fields} 

Let $C$ be a real, positive-definite $2n \times 2n$ matrix, and let $\mu_0$ be the Gaussian probability measure on $\mathbb{R}^{2n}$ given by
\begin{equation} \label{Gaussian}
{\rm d}\mu_0 = \frac{1}{Z_C} e^{-x\cdot (2C)^{-1}x}{\rm d}x\ .
\end{equation}
Let $V$ be a smooth function on $\R^{2n}$ that is bounded from below and adjusted such that 
\begin{equation}
\label{Gibbsmeas}
{\rm d}\mu_V := e^{-V}{\rm d}\mu_0
\end{equation}
is a probability measure. 
We define a function $H$ on $\mathbb{R}^{2n}$ by
\begin{equation}\label{fdH}
H := V+ x\cdot (2C)^{-1}x - \ln Z_C,
\end{equation}
so that 
$${\rm d}\mu_V  = e^{-H} {\rm d}^{2n}x\ .$$
  
Let $b(x) $ be a (bounded Lipschitz-continuous) vector field on $\R^{2n}$. We consider the SDE 
\begin{equation}\label{BSDE}
dx_t = b(x_t){\rm d}t + \sigma {\rm d}w_t\ ,
\end{equation} 
where $\sigma$ is a positive-definite matrix on $\mathbb{R}^{2n}$ and $dw$ is Brownian motion on 
$\mathbb{R}^{2n}$.
The question of interest in this section is: {\em For which choices of the drift vector field $b$
is $\mu_V$ an invariant measure for the SDE in Eq. (\ref{BSDE})?}
Under mild conditions on $b$, the process determined by \eqref{BSDE} conserves probability; 
we suppose this to be the case, for now, and will verify it later when we make a specific choice 
for the drift vector field $b$.

By Ito's formula, 
$$
\lim_{h\downarrow 0} \frac{1}{h}  \mathbb{E} (\varphi(x_{t+h}) - \varphi(x_t) \ | \ x_t) = \mathcal{L}\varphi (x_t),$$
for any smooth function $\varphi$ on $\R^{2n}$, where
$$
 \mathcal{L}\varphi(x) := \frac12 \Delta_{\sigma^2}\varphi(x) + b(x)\cdot \nabla \varphi(x), 
 \qquad{\rm with}\qquad   \Delta_{\sigma^2}\varphi(x)  := \nabla \cdot \sigma^2 \nabla \varphi(x)\ .
 $$
 
 Suppose that $x_t$ is a solution of the SDE (\ref{BSDE}) and that the initial distribution, i.e., the law of $x_0$,  is $f_0\mu_V$. Then the distribution of 
 $x_t$ is of the form $f_t\mu_V$, where $f_t$ is a smooth function.   Then 
 $$
 \frac{{\rm d}}{{\rm d} t} \mathbb{E}\varphi(x_t) =   \int_{\R^{2n}} \mathcal{L} \varphi(x)  f_t(x)\dd \mu_V =
 \int_{\R^{2n}}  \varphi(x) \frac{\partial}{\partial t}f_t(x)\dd \mu_V\ .
 $$
 It follows that $f_t$ satisfies the equation
  \begin{equation}\label{invar0}
 \frac{\partial}{\partial t}f_t(x) = \mathcal{L}^*f_t(x),
 \end{equation}
 where $\mathcal{L}^*$ is the adjoint of $\mathcal{L}$ in $L^2(\mu_V)$. 
 
 In particular,  it follows that $\mu_V$ is an invariant measure for the process solving the SDE \eqref{BSDE}
 if and only if 
 \begin{equation}\label{invar}
 \int_{\R^{2n}} (\mathcal{L}\varphi)\dd \mu_V= 0,
 \end{equation}
 for all $\varphi$, or, equivalently, if $\mathcal{L}^* 1 = 0$.  
 Integrating by parts one finds that
 $$
 \int_{\R^{2n}} (\Delta_{\sigma^2}f)e^{-H}{\rm d}^{2n} x =  \int_{\R^{2n}} f \left[ \nabla H \cdot \sigma^2 \nabla H - \Delta_{\sigma^2}H\right] e^{-H}\ ,
 $$
 and
 $$\int_{\R^{2n}} (b\cdot \nabla f)e^{-H}{\rm d}^{2n} x  = \int_{\R^{2n}} f \left[  \nabla H \cdot b  -\nabla \cdot b  \right] e^{-H}{\rm d}^{2n} x \ .$$
 Thus, (\ref{invar}) is satisfied if and only if 
 $$ \frac12 \left[ \nabla H \cdot \sigma^2 \nabla H - \Delta_{\sigma^2}H\right]  + \left[  \nabla H \cdot b -\nabla \cdot b \right]  =0\ .$$
This holds true whenever
  \begin{equation}\label{invar2}
   b = u +v, 
 \end{equation}
with
 \begin{equation}\label{invar3}
   u = - \frac12 \sigma^2 \nabla H \qquad  {\rm and} \qquad   \nabla H \cdot v -\nabla \cdot v = 0\ .
    \end{equation}
    
If $J$ is the usual symplectic matrix on $\R^{2n}$ and $v = J\nabla H$ 
then $\nabla H \cdot v =0$ and $\nabla \cdot v =0$. We therefore take
 \begin{equation}\label{invar4}
   b  = - \frac12 \sigma^2 \nabla H + J \nabla H\ .
    \end{equation}
    
To the extent that $\sigma$ is  ``small'' in an appropriate sense, the SDE (\ref{BSDE}), with $b$ given by (\ref{invar4}), is a ``small'' stochastic perturbation
of the deterministic Hamiltonian flow
$$\dd x_t = J\nabla H(x_t)\dd t\ ,$$
and $\mu_V$ is  an invariant measure for the process described by this SDE. 
In the next subsection we will discuss the relaxation of initial distributions to the invariant Gibbs measure. 

Note that the set-up considered here is not the same as the one for the standard Ornstein-Uhlenbeck process on phase space, which corresponds to a Langevin equation with velocity variables 
that enter the Hamiltonian quadratically, and in which the
noise acts only on the velocity variables and not on the position variables. 
Here the noise acts on all of the canonical variables, and this facilitates exponential relaxation. 

\subsection{Exponential relaxation in the finite-dimensional model}

 Suppose that $x_t$ is a solution of the SDE (\ref{BSDE}) and that the initial distribution, i.e., the law of $x_0$,  is given by a distribution $f_0\mu_V$. Then the distribution of 
 $x_t$ is of the form $f_t\mu_V$, where $f_t$ is smooth, 
 and we have seen that $f_t$ satisfies (\ref{invar0}). 
 In this subsection we further suppose 
 that $f_0 \in L^2(\mu_V)$.  
  Then
 $ \frac{\partial}{\partial t}(f_t(x)-1) = \mathcal{L}^*f_t =   \mathcal{L}^*(f_t -1),$
 so that
\begin{equation}\label{gapA}
 \frac{{\rm d}}{{\rm d}t} \| f_t - 1\|^2_{L^2(\mu_V)} = 2\langle (f_t -1) \mathcal{L}^*(f_t -1)\rangle_{L^2(\mu_V)}  = 
 \langle (f_t -1)[\mathcal{L} +  \mathcal{L}^*](f_t -1)\rangle_{L^2(\mu_V)}\ .
 \end{equation}
This suggests to define the operator
\begin{equation}\label{Hdef}
\H = -\frac12 [\mathcal{L} + \mathcal{L}^*]\ .
\end{equation}
We note that this operator is self-adjoint on $L^2(\mu_V)$. 

For any smooth test function $f$, 
$$\langle f, \H f \rangle_{L^2(\mu_V)}  =  \frac12\int_{R^{2n}} \nabla f \cdot \sigma^2 \nabla f {\rm d}\mu_V\ .$$
We define the {\em ``interacting'' Dirichlet form}, $\E(f,f)$, by
\begin{equation} \label{Dirichlet}
\E(f,f) =  \frac12\int_{R^{2n}} \nabla f \cdot \sigma^2 \nabla f {\rm d}\mu_V\ ,
\end{equation}
and we note that $\H$ is the non-negative operator associated to this quadratic form. 

It is clear that $\E(f,f) =0$ if and only if $f$ is constant, and so the null space of $\H$ is spanned by the constant function $1$. 
The spectral gap of $\H$  is  the quantity defined by
\begin{equation}\label{gap}
E_1 := \inf\{  \langle f, \H f\rangle_{L^2(\mu_V)}\ :    \langle f, 1\rangle_{L^2(\mu_V)}= 0 \quad {\rm and}\quad   \langle f, f\rangle_{L^2(\mu_V)}= 1\} \ .
\end{equation}
It then follows from (\ref{gapA}) and (\ref{Hdef}) that
\begin{equation}\label{gapB}
 \frac{{\rm d}}{{\rm d}t} \| f_t - 1\|^2_{L^2(\mu_V)}   = 
 \langle (f_t -1)2\mathcal{H}(f_t -1)\rangle_{L^2(\mu_V)} \leq -2E_1  \| f_t - 1\|^2_{L^2(\mu_V)} \ ,
 \end{equation}
 so that 
 \begin{equation}\label{gapC}
  \| f_t - 1\|_{L^2(\mu_V)}   \leq e^{-tE_1}   \| f_0 - 1\|_{L^2(\mu_V)} \ .
 \end{equation}
 Thus, when $f_0\in L^2(\mu_V)$, $f_t\mu_V$ relaxes exponentially fast to the equilibrium distribution $\mu_V$.

The part of $\mathcal{L}$ involving $v$, the Hamiltonian part of the drift field, does not contribute to the Dirichlet form $\mathcal{E}$, and hence it does not affect the exponential relaxation rate $E_1$. Of course, there is exponential relaxation only if $E_1>0$.
In the finite-dimensional case, %%%CHANGES HERE
there are many ways to show, under our contidions,  that $E_1$ is non-zero, and that it is an eigenvalue of $\H$. 

In the infinite-dimensional case, %%% CHANGE ENDS HERE
 the positivity of $E_1$ is not an obvious property and does not hold in general.
However, if we can show that $e^{-t\mathcal{H}}$ is compact or, better yet, trace-class then we conclude,
once again, that $E_1$ is the second smallest eigenvalue of 
$\H$ and is strictly positive. 

Therefore, it is of interest to estimate $\tr[e^{-t\H}]$ by a method that can be applied in the infinite-dimensional situation. We conclude our discussion of the finite-dimensional case by explaining such a method enabling us to estimate  $\tr[e^{-t\H}]$.  The estimate on this quantity that we obtain appears to be new even in the finite-dimensional case.  Although we quantitatively estimate  $\tr[e^{-t\H}]$, we do not find a quantitative estimate on $E_1$.  Under the conditions considered here,
which include the example of a function $H(x)$, see (\ref{fdH}), to correspond to a double well potential, the gap $E_1$ can be exponentially small 
in the parameter playing the role of $\lambda$.  

To carry out this program,  we introduce the {\em free Dirichlet form} $\E_0(f,f)$ through
\begin{equation}\label{freedirdef}
\E_0(f,f) = \int_{R^{2n}} \nabla f \cdot \sigma^2 \nabla f \dd \mu_0\ .
\end{equation}
We define a non-negative operator $\H_0$ on $L^2(\mu_0)$ by
\begin{equation}\label{freeHdef}
\langle f, \H_0f\rangle_{L^2(\mu_0)} = \E_0(f,f) \ .
\end{equation}

The operator $\H_0$ is a very familiar, simple operator; 
at least in the case where $\sigma$ is a power of $C$. 
Let us suppose that $\{u_1,\dots,u_{2n}\}$ is an orthonormal basis of $\R^{2n}$ consisting of common eigenvectors of $C$ and $\sigma$. Let
$\nu_k$ denote the $k$th eigenvalue of $C$; i.e., $Cu_k = \nu_ku_k$. We suppose that there is an $s$, with
$0<s<1/2$, such that 
$$\sigma u_k = \nu_k^s u_k, \qquad \text{for} \text{  } \text{all} \text{   }k.$$
Then, when expressed in terms of partial derivatives in coordinates derived from the orthonormal 
basis  $\{u_1,\dots,u_{2n}\}$,
$$
\H_0 f =  \sum_{k=1}^{2n}\nu_k^{2s}  \left( -\frac{\partial^2}{\partial x_k^2} + \frac{x_k}{\nu_k}  \frac{\partial}{\partial x_k} \right).
$$
The operators  ${\displaystyle  \left( -\frac{\partial^2}{\partial x_k^2} + \frac{x_k}{\nu_k}  \frac{\partial}{\partial x_k} \right)}, k=1, ..., 2n,$ all commute with one another and 
are what one knows as scaled ``number operators'', which are diagonalized by Hermite polynomials. 
Their eigenvalues are given by $m\nu_k^{-1}$, with $m=0,1,2,...,$ and each eigenvalue has multiplicity one.
We thus have the following lemma.

\begin{lm}\label{traceclass}
$$\tr[e^{-t\H_0}] = \prod_{k=1}^{2n} \frac{1}{1 - e^{-t\nu_k^{2s-1}}}\ .$$
\end{lm}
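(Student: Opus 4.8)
The plan is to use the explicit diagonalization of $\H_0$ that has just been set up. Since $\{u_1,\dots,u_{2n}\}$ is an orthonormal basis of common eigenvectors of $C$ and $\sigma$, the operator $\H_0$ is a sum of $2n$ commuting one-dimensional pieces, namely the scaled number operators $\nu_k^{2s}\bigl(-\partial^2_{x_k} + \nu_k^{-1} x_k\,\partial_{x_k}\bigr)$, each acting on its own coordinate $x_k$. Correspondingly, $L^2(\mu_0)$ factorizes as a tensor product $\bigotimes_{k=1}^{2n} L^2(\mathbb{R},\mu_0^{(k)})$, where $\mu_0^{(k)}$ is the one-dimensional Gaussian with variance $\nu_k$, and under this identification $\H_0 = \sum_{k=1}^{2n} I\otimes\cdots\otimes A_k\otimes\cdots\otimes I$ with $A_k = \nu_k^{2s}\bigl(-\partial^2_{x_k}+\nu_k^{-1}x_k\partial_{x_k}\bigr)$. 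Hence $e^{-t\H_0} = \bigotimes_{k=1}^{2n} e^{-tA_k}$, and by the multiplicativity of the trace over tensor products, $\tr[e^{-t\H_0}] = \prod_{k=1}^{2n}\tr[e^{-tA_k}]$.

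It then remains to compute each one-dimensional trace $\tr[e^{-tA_k}]$. The operator $-\partial^2_x + \nu_k^{-1}x\,\partial_x$ is the Ornstein-Uhlenbeck (number) operator for the Gaussian of variance $\nu_k$; its eigenfunctions are the (rescaled) Hermite polynomials $H_m(x/\sqrt{\nu_k})$, $m=0,1,2,\dots$, each with multiplicity one, and eigenvalue $m\nu_k^{-1}$ — exactly as recalled in the text just before the lemma. Therefore $A_k$ has eigenvalues $\nu_k^{2s}\cdot m\nu_k^{-1} = m\,\nu_k^{2s-1}$, $m\ge 0$, each simple, so that
\[
\tr[e^{-tA_k}] = \sum_{m=0}^\infty e^{-tm\nu_k^{2s-1}} = \frac{1}{1-e^{-t\nu_k^{2s-1}}},
\]
the last equality being the geometric series, which converges because $\nu_k>0$ and $t>0$ force $e^{-t\nu_k^{2s-1}} < 1$. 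Multiplying over $k=1,\dots,2n$ gives the claimed formula.

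There is essentially no serious obstacle here: the content is entirely the spectral decomposition of the Ornstein-Uhlenbeck generator, which is classical, together with the observation that eigenvalues of a sum of commuting operators on a tensor product are sums of eigenvalues, so that traces of exponentials multiply. The only points requiring a word of care are (i) justifying the tensor-product factorization of $L^2(\mathbb{R}^{2n},\mu_0)$ — immediate since $\mu_0$ is a product measure in the $u_k$-coordinates — and (ii) noting that since all eigenvalues $\nu_k^{2s-1}$ are strictly positive, $e^{-t\H_0}$ is genuinely trace-class for every $t>0$, so the formal manipulation of the trace is legitimate. This finite-dimensional computation is the model for the infinite-dimensional estimate of $\tr[e^{-t\H_0}]$ to be carried out later, where the analogous product over $k\in\Z$ must be shown to converge; but at the present (finite $n$) stage convergence is automatic.
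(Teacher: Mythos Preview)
Your proof is correct and follows exactly the approach the paper takes: the text immediately preceding the lemma sets up $\H_0$ as a sum of commuting scaled number operators with eigenvalues $m\nu_k^{-1}$, and the lemma is then stated without further proof as the evident consequence. Your write-up simply makes explicit the tensor-product factorization and the geometric-series summation that the paper leaves implicit.
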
 
It follows that
$$
\log\left( \tr[e^{-t\H_0}] \right) = -\sum_{k=1}^{2n} \log\left( 1 - e^{-t\nu_k^{2s-1}}\right),
$$
and this is readily estimated in terms of the sum
$$
\sum_{k=1}^{2n} e^{-t\nu_k^{2s-1}}\ ,
$$
which converges, as $n$ tends to infinity, under mild growth conditions on $\nu_{k}^{-1}$, assuming that $s<1/2$.

Next, we show that $\H$ is unitarily equivalent to an operator of the form $\H_0 + U$, where $U$ is a multiplication operator with the property that the negative part of 
$\tau U$ is exponentially integrable with respect to $\mu_0$, for all $\tau > 0$.   The next lemma explains the relevance of this fact.

\begin{lm}\label{hyper}
$$\tr[e^{-2t(\H_0+U)}] \leq \tr[e^{-t\H_0}] \left(\int_{\R^{2n}}e^{-2C_{{\rm LS}}U}\dd \mu_0\right)^{ t/C_{{\rm LS}}}$$
where $C_{{\rm LS}}$ is defined to be the largest of the numbers $2\nu_k^{1-2s}$, with $1-2s>0$.

\end{lm}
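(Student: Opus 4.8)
The plan is to obtain the estimate from a single operator inequality, $\H_0+U\ge\tfrac12\H_0-c^*$, with $c^*:=\tfrac{1}{2C_{LS}}\ln\!\int e^{-2C_{LS}U}\,d\mu_0$: one half of $\H_0$ is spent absorbing the potential $U$, while the other half supplies the trace-class factor $\tr[e^{-t\H_0}]$ — with the full $t$, since $e^{-2t\cdot\frac12\H_0}=e^{-t\H_0}$. The starting point is the logarithmic Sobolev inequality for $\H_0$: for every $f$ in the form domain,
$$\mathrm{Ent}_{\mu_0}(f^2):=\int f^2\ln f^2\,d\mu_0-\|f\|_2^2\ln\|f\|_2^2\ \le\ C_{LS}\,\langle f,\H_0 f\rangle_{L^2(\mu_0)}\,.$$
Since $\mu_0$ is a product of one-dimensional Gaussians of variances $\nu_k$ and, as computed just above the statement, $\H_0=\sum_k\nu_k^{2s}(-\partial_{x_k}^2+\nu_k^{-1}x_k\partial_{x_k})$ is a sum of commuting operators acting one coordinate at a time, this reduces — the Gaussian log-Sobolev inequality tensorizing with no loss in the constant — to the one-dimensional fact that the $k$-th summand, whose Dirichlet form is $\nu_k^{2s}\!\int|\partial_{x_k}f|^2\,d\mu_0$, controls the one-dimensional entropy with constant $2\nu_k^{1-2s}$. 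Taking the largest such constant over $k$ yields precisely the $C_{LS}$ of the statement, and equivalently $\tfrac12\H_0$ obeys the inequality with constant $2C_{LS}$.

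Next I would invoke the Gibbs variational (dual) form of the inequality: for any real $W$ with $\int e^{W}d\mu_0<\infty$ and any $f$ in the form domain of $\H_0$,
$$\int Wf^2\,d\mu_0\ \le\ \mathrm{Ent}_{\mu_0}(f^2)+\|f\|_2^2\ln\!\int e^{W}d\mu_0\ \le\ 2C_{LS}\big\langle f,\tfrac12\H_0 f\big\rangle+\|f\|_2^2\ln\!\int e^{W}d\mu_0\,,$$
the first step being the pointwise bound $Wf^2\le f^2\ln f^2-f^2+e^{W}$ integrated. Choosing $W=-2C_{LS}U$ and rearranging gives $\big\langle f,(\tfrac12\H_0+U)f\big\rangle\ge-c^*\|f\|_2^2$, hence $\H_0+U=\tfrac12\H_0+(\tfrac12\H_0+U)\ge\tfrac12\H_0-c^*$ as forms (the form domain of $\H_0+U$ being contained in that of $\tfrac12\H_0$). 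By the min–max principle the eigenvalues of $\H_0+U$ then dominate those of $\tfrac12\H_0-c^*$, so, since $x\mapsto e^{-sx}$ is decreasing,
$$\tr\!\big[e^{-2t(\H_0+U)}\big]\ \le\ \tr\!\big[e^{-2t(\frac12\H_0-c^*)}\big]\ =\ e^{2tc^*}\tr\!\big[e^{-t\H_0}\big]\ =\ \Big(\int e^{-2C_{LS}U}d\mu_0\Big)^{t/C_{LS}}\tr\!\big[e^{-t\H_0}\big]\,,$$
which is the assertion, $\tr[e^{-t\H_0}]<\infty$ being Lemma~\ref{traceclass}.

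Two points need care. First, one must know that $\H_0+U$ is a genuine bounded-below self-adjoint operator with $e^{-2t(\H_0+U)}$ its semigroup; this follows — via the closed quadratic form $f\mapsto\langle f,\H_0 f\rangle+\int Uf^2\,d\mu_0$ — from hypercontractivity of $e^{-t\H_0}$ (Gross's equivalent of the inequality above) together with the hypothesis $\int e^{-2C_{LS}U}\,d\mu_0<\infty$, the same input already needed in the paper to make sense of $\H$ and of its unitary equivalence with $\H_0+U$. When $U$ is unbounded below I would first prove the bound for $U^{(M)}:=\max(U,-M)$ — for which everything is unambiguous and $c^*$ only decreases — and then let $M\to\infty$, the semigroups increasing monotonically to $e^{-2t(\H_0+U)}$ while the right-hand side stays below $\big(\int e^{-2C_{LS}U}d\mu_0\big)^{t/C_{LS}}\tr[e^{-t\H_0}]$. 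I do not expect any isolated deep obstacle: the only thing that has to come out exactly is the arithmetic of the two constants, and it lands on $2C_{LS}$ and $t/C_{LS}$ precisely because $\H_0$ is cut into two equal halves — splitting instead as $\theta\H_0+(1-\theta)\H_0$ with $\theta\in(0,1)$ gives the family $\tr[e^{-2t(\H_0+U)}]\le\big(\int e^{-\frac{C_{LS}}{1-\theta}U}d\mu_0\big)^{2t(1-\theta)/C_{LS}}\tr[e^{-2t\theta\H_0}]$, of which the statement is the case $\theta=\tfrac12$.
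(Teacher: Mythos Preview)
Your argument is correct and reaches exactly the same bound, but the route differs from the paper's in one step. Both arguments use the logarithmic Sobolev inequality for $\H_0$ (with constant $C_{{\rm LS}}$) in its dual Federbush form to obtain the spectral lower bound $\tfrac12\H_0+U\geq -c^*$, equivalently $\H_0+2U\geq -2c^*$; this part is identical. The paper then combines this with the trace via the Golden--Thompson inequality, writing
\[
\tr[e^{-2t(\H_0+U)}]\ \leq\ \tr\big[e^{-t\H_0}e^{-t(\H_0+2U)}\big]\ \leq\ \tr[e^{-t\H_0}]\,\|e^{-t(\H_0+2U)}\|_\infty\ =\ \tr[e^{-t\H_0}]\,e^{t\lambda(2U)}\,,
\]
and then inserts the Federbush bound $\lambda(2U)\leq C_{{\rm LS}}^{-1}\log\!\int e^{-2C_{{\rm LS}}U}\,\dd\mu_0$. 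You instead use the form inequality $\H_0+U\geq \tfrac12\H_0-c^*$ directly and invoke the min--max principle to conclude $\tr[e^{-2t(\H_0+U)}]\leq \tr[e^{-t\H_0}]\,e^{2tc^*}$. Your path is slightly more elementary in that it avoids Golden--Thompson altogether; the price is the need to check (as you do) that the form domain of $\H_0+U$ sits inside that of $\H_0$ so that min--max applies, and to pass through the truncation $U^{(M)}$ when $U$ is unbounded below. The paper's route has the advantage that Golden--Thompson decouples the trace-class piece from the potential piece at the level of operators, so the only spectral input needed is the infimum of $\H_0+2U$; your route has the advantage of giving, with no extra work, the whole family of inequalities obtained from the splitting $\theta\H_0+(1-\theta)\H_0$, as you note.
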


\begin{proof}  The Golden-Thompson Inequality implies that, for any bounded continuous function $U$,  
$$\tr [e^{-2t(\H_0+U)}] \leq \tr[e^{-t\H_0} e^{-t(H_0 + 2U)}] \leq  \tr[e^{-t\H_0} \| e^{-t(H_0 + 2U)}]\|_\infty $$
where  $\| \cdot \|_\infty$ denotes the operator norm, and 
$$\|e^{-t(H_0 + 2U)}]\|_\infty = e^{t\lambda(2u)}$$
where $-\lambda(2u)$ is the bottom of the spectrum of $\H_0+ 2U$:
\begin{equation}\label{Feder1}
\lambda(2U) = \sup \left\{-2\langle \varphi, U \varphi \rangle_{L^2(\mu_0)} -   \langle \varphi, \H_0 \varphi\rangle_{L^2(\mu_0)}\ :\  \|\varphi\|_2 =1    \right\}
\end{equation}

Because $\H_0$ satisfies a logarithmic Sobolev inequality, we can bound $\lambda(2U)$ in terms of $\int e^{-2U}\dd \mu_0$, using an argument of Federbush \cite{Federbush} that
we now recall (with some further optimization).
 For all $b\in \R$ and $r,a>0$,  
 $$ab \leq  ra(\log a + \log r) + \frac{1}{e}e^{b/r}\ .$$
 Therefore, for all $r>0$,  and $\|\varphi\|_{L^2(\mu_0) }=1$, 
\begin{equation}\label{Feder2}
-2\langle \varphi, U \varphi \rangle_{L^2(\mu_0)} \equiv  \int_{\R^{2n} }(-2U) |\varphi|^2 \leq r \int_{\R^{2n}} | \varphi|^2 \log  | \varphi|^2 \dd \mu_0 + r\log r  + \frac{1}{e} \int_{\R^{2n}} e^{-2U/r}\dd\mu_0\ .
\end{equation}
 
 The Logarithmic Sobolev Inequality satisfied by $\H_0$ says that for $\|\varphi\|_{L^2(\mu_0) }=1$,
 \begin{equation}\label{tens0}
\int_{\R^{2n}} | \varphi|^2 \log  | \varphi|^2 \dd \mu_0    \leq C_{{\rm LS}} \langle \varphi, \H_0 \varphi \rangle_{L^2(\mu_0)} .
\end{equation}
where $C_{{\rm LS}}$ is defined to be the largest of the numbers $2\nu_k^{1-2s}$. 

Combining (\ref{Feder1}) and (\ref{Feder2}) with $r = 1/C_{{\rm LS}}$, we obtain
\begin{equation}\label{Feder3}
\lambda(2U) \leq \frac{1}{C_{{\rm LS}}}\log \frac{1}{C_{{\rm LS}}} + \frac{1}{e} \int_{\R^{2n}}e^{-2C_{{\rm LS}}U}\dd \mu_0\ .
\end{equation}
Now observe that if we replace $U$ by $U+ u/2$ where $u\in \R$, we have 
$$
\lambda(2U)  =  \lambda(2U + u) + u \leq  \frac{1}{C_{{\rm LS}}}\log \frac{1}{C_{{\rm LS}}} + \frac{e^{-uC_{{\rm LS}}}}{e} \int_{\R^{2n}}e^{-2UC_{{\rm LS}}}\dd \mu_0 + u\ .$$
Optimizing over $u$, we obtain
\begin{equation}\label{ImFe}
\lambda(2U)   \leq    \frac{1}{C_{{\rm LS}}}  \log \left( \int_{\R^{2n}}e^{-2C_{{\rm LS}}U}\dd \mu_0\right)
\end{equation}
Therefore,
$$e^{t\lambda(2U)} \leq  \left(\int_{\R^{2n}}e^{-2C_{{\rm LS}}U}\dd \mu_0\right)^{ t/C_{{\rm LS}}}\ .$$
\end{proof}

\if false

It follows that 
$$\lambda(2U) \leq   \int_{\R^{2n}} e^{4[U]_-}\dd \mu_0\ $$.
Therefore, $e^{-t(H_0 + 2U)}$ is a bounded operator. We already know that $e^{-t\H_0}$ is trace class, and the product of
a bounded operator and a trace class operator is trace class.

For each $k$. let $A_k$ denote the operator
$$A_k = \nu_k^{2s}  \left( -\frac{\partial^2}{\partial x_k^2} + \frac{x_k}{\nu_k}  \frac{\partial}{\partial x_k} \right)$$
so that $\H_0 = \sum_{k=1}^{2n}A_k$. Let $\gamma_k$ denote the Gaussian probability measure
$$\gamma_k(x) = \frac{1}{\sqrt{2\pi \nu_k}}e^{-x^2/2\nu_k}\ .$$
Integrating by parts, for any test function $\varphi$ on $\R$,
$$\int_\R \varphi A_k \varphi \dd \gamma_k = \nu_k^{2s} \int_\R |\nabla \varphi|^2 \dd \gamma_k\ .$$
The Logarithmic Sobolev Inequality states that 
$$
\int_\R | \varphi|^2 \log  | \varphi|^2 \dd \gamma_k  - \left(\int_\R | \varphi|^2 \dd \gamma_k\right)\log 
 \left(\int_\R | \varphi|^2 \dd \gamma_k\right)  \leq \nu_k^{2s-1}   \int_\R |\nabla \varphi|^2 \dd \gamma_k\ .$$
 Let $C_{\rm LS}$ be given by
 $$
C_{\rm LS}  = \inf_{k} \nu_k^{2s-1}\ .
$$
Then by the tensorization properties of Logarithmic Sobolev inequalities,  for any test function $\varphi$ on $\R^{2n}$,
\begin{equation}\label{tens}
\int_{\R^{2n}} | \varphi|^2 \log  | \varphi|^2 \dd \mu_0  - \left(\int_{\R^{2n}} | \varphi|^2 \dd \mu_0\right)\log 
 \left(\int_{\R^{2n}} | \varphi|^2 \dd \mu_0\right)   \leq C_{\rm LS} \langle \varphi, \H_0 \varphi \rangle_{L^2(\mu_0)} .
\end{equation}

Let $W$ be the operator on $L^2(\mu_0)$ of multiplication by a real-valued bounded continuous function $W$. Let 
$-\lambda(W)$ denote the bottom of the spectrum of $\H_0 + W$. The the dual (by Legendre transform) of
the Logarithmic Sobolev inequality (\ref{tens}) is the inequality
$$\lambda(W) \leq \int_{\R^{2n}} e^{2[W]_-}\dd \mu_0\ $$.

\end{proof}

  Let $x_t$ denote a solution of our SDE with initial distribution $f_0\mu_V$ where $f_0\in L^2(\mu_V)$. 
        For any smoooth function test function $\varphi$, 
        $$
        \frac{{\rm d}}{{\rm d}t} \mathbb{E}\varphi(x_t) = \mathbb{E}(\mathcal{L}\varphi(x_t))\ .
        $$
$$\lim_{h\downarrow 0} \mathbb{E}(f(x_t)f(x_{t+h}) - f^2(x_t))  = \int_{\R^{2n}} f \mathcal{L}f {\rm d}\mu_V = -\mathcal{E}(f,f)  = -\frac12 \int_{R^{2n}} \nabla f \cdot \sigma^2 \nabla f {\rm d}\mu_V\ .$$

We define the {\em Dirichlet form} $\E(f,f)$ by
\begin{equation}\label{dirinta}
\E(f,f) = \frac12 \int_{R^{2n}} \nabla f \cdot \sigma^2 \nabla f {\rm d}\mu_V
\end{equation}
and the operator $H$ by 
\begin{equation}\label{dirintb}
 \int_{R^{2n}} f \H f  {\rm d}\mu_V  =   \E(f,f)\ .
\end{equation}
Integrating by parts one finds that
 \begin{equation}\label{invar4}
 \mathcal{H}  
 = -\frac12 \Delta_{\sigma^2} + \frac12 \nabla H \sigma^2\cdot \nabla\ .
 \end{equation}
The part of $\mathcal{L}$ involving $v$, the Hamiltonian part of the drift field, makes no contribution to the Dirichlet
form $\mathcal{E}$, and hence not to the operator $\H$. 

Note that $\H = -\tfrac12\left(\mathcal{L} + \mathcal{L}^*\right)$
The adjoint of $\mathcal{L}$ has been computed in $L^2(\mu_V)$, and evidently $\H$ is self adjoint on that space. Indeed, the
operator $\H$ is evidently positive semidefinite, and $\E(f,f)= 0$ if and only if $f$ is constant. The the null space of $\H$ is one dimensional and is spanned by
the constant function $1$.

Our goal in this subsection is to   show that $\H$ has a strictly positive spectral gap, and to do this using an argument that can be adapted to the infinite
dimension setting that is our main interest. 
Since we know that $0$ is an eigenvalue of $\H$ of multiplicity one, and hence that for all $t>0$, $1$ is an eigenvalue of
$e^{-t\H}$ of multiplicity one, it suffices to observe that in the finite dimensional case $1$ cannot be an accumulation point of the spectrum since
there are only finitely many eigenvalues.

When we turn to the infinite dimensional analog, we will encounter an operator $\H$ for which $0$ will be an eigenvalue of multiplicity one. 
If we can then show that $e^{-t\H}$ is trace class, and in particular compact,  for all $t>0$, we shall know once again that $1$ cannot be an accumulation point of
the spectrum of $e^{-t\H}$, and thus that $0$ cannot be an  accumulation point of
the spectrum of $\H$. This will prove the existence of a spectral gap. 

Suppose that we know that $E_1>0$. Consider any initial distribution for our stochastic process of the form $f\mu_V$ with 
$f\in L^2(\mu_V)$. Then $f -1$ is orthogonal
to $1$ in $L^2(\mu_V)$, and so 
$$\|e^{-t\mathcal{L}}f  -1\|_{L^2(\mu_V)}  \|e^{-t\H}f  -1\|_{L^2(\mu_V)} = \|e^{-t\H}(f  -1)\|_{L^2(\mu_V)}  \leq e^{-tE_1}\|f-1\|_{L^2(\mu_V)}\ .$$
Since $[e^{-t\mathcal{L}}f]\mu_V$ is the distribution of the solution $x_t$ of our SDE at time $t$, we have the exponential relation that we seek.

We shall now explain a method for obtaining a bound on $\tr e^{-t\H}$ that will extend to the infinite dimensional setting. While it is obvious in
the finite dimensional setting that $e^{-t\H}$ is trace class, this is not so evident in the infinite dimensional setting. 

Next, note that  for all $t>0$, $e^{-t\H}$ is a contraction on $L^2(\mu_V)$, and that $e^{-t\H}1=1$. Since it is easily seen that the 
Peron-Frobenius Theorem applies $e^{-t\H}$, we see that $1$ is an eigenvalue of multiplicity one. It follows immediately that $\H$ has a spectral gap.
That is
$$E_1 := \inf\{  \langle f, \H f\rangle_{L^2(\mu_V)}\ :    \langle f, 1\rangle_{L^2(\mu_V)}= 0 \quad {\rm and}\quad   \langle f, f\rangle_{L^2(\mu_V)}= 1\} > 0\ .$$
Therefore, 
$$\|e^{-t\H}f -1\|_{L^2(\mu_V)} \leq  e^{-tE_1}\|f-1\|_{L^2(\mu_V)}\ .$$
Since if the initial distribution of the solution of our SDE is $f\mu_V$, the distribution at time $t$ will be $(E^{-t\H}f)\mu_V$, this proves that the distribution 
relaxes to the equilibrium distribution exponentially fast. 

There are two main ingredients to this argument. The first is a bound on $\tr e^{-t\H}$ and the second is a Perron-Frobenius argument. Both
steps will require some care in the infinite dimensional case that we discuss in full detail in the next subsections.  While the fact that  $\tr e^{-t\H} < \infty$ is trivial in the finite dimensional model,
it is worthwhile to obtain an explicit bound by a method that may be adapted to the infinite dimensional case.  We close this subsection by proving such a bound.

\fi

We conclude this subsection by proving that  $\H$ is unitarily  equivalent to $\H_0+U$ for some potential 
$U$ such that $e^{-\tau U}$ is integrable with respect to $\mu_0$, for all $\tau>0$. 

Let $\mu_V := e^{-V} \mu_0$. For arbitrary $f\in L^2(\mu_V)$, we define
\begin{equation} \label{mu_V}
 Tf = e^{-V/2}f. 
 \end{equation}
 Then $T$ is unitary from $L^2(\mu_V)$ onto $L^2(\mu_0)$. 
We fix a smooth $f\in L^2(\mu_V)$. 

\begin{lm}\label{utran} Let $T$ be the unitary operator defined in (\ref{mu_V}). Then 
$$T\H T^{-1} = \H_0 + U,$$
where
$$U = \frac14 \nabla V \cdot \sigma^2 \nabla V + \frac12 \H_0 V\ .$$

\end{lm}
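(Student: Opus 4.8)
The identity in Lemma~\ref{utran} is obtained by the standard ground-state substitution, and the plan is to conjugate the Dirichlet form defining $\H$ by $e^{V/2}$ and then read off the discrepancy from $\H_0$. First I would check that $T$ of \eqref{mu_V} is genuinely unitary: $\|e^{-V/2}f\|_{L^2(\mu_0)}^2 = \int_{\R^{2n}} |f|^2 e^{-V}\dd\mu_0 = \int_{\R^{2n}} |f|^2\dd\mu_V = \|f\|_{L^2(\mu_V)}^2$, with inverse $T^{-1}g = e^{V/2}g$; hence $T\H T^{-1}$ is self-adjoint on $L^2(\mu_0)$. Since smooth, polynomially bounded functions form a common core for $\H$, for $\H_0$, and --- because $V$ is smooth and bounded from below --- for $\H_0+U$, it suffices to prove $\langle g,(T\H T^{-1})g\rangle_{L^2(\mu_0)} = \langle g,(\H_0+U)g\rangle_{L^2(\mu_0)}$ for all such $g$.

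For the computation, fix such a $g$. By unitarity of $T$, $\langle g,(T\H T^{-1})g\rangle_{L^2(\mu_0)} = \langle e^{V/2}g,\,\H(e^{V/2}g)\rangle_{L^2(\mu_V)} = \E(e^{V/2}g,\,e^{V/2}g)$. Since $\E$ is the $\sigma^2$-weighted gradient form against $\mu_V$ and $\dd\mu_V = e^{-V}\dd\mu_0$, the factor $e^{V}$ arising from $|\nabla(e^{V/2}g)|^2$ cancels $e^{-V}$, and with $\nabla(e^{V/2}g) = e^{V/2}(\nabla g + \tfrac12 g\,\nabla V)$ one gets (once the gradient-form normalizations of $\E$ and $\E_0$ are taken to agree) $\langle g,(T\H T^{-1})g\rangle_{L^2(\mu_0)} = \int_{\R^{2n}} (\nabla g + \tfrac12 g\,\nabla V)\cdot\sigma^2(\nabla g + \tfrac12 g\,\nabla V)\dd\mu_0$. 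Expanding the $\sigma^2$-quadratic form gives three pieces: the term $\nabla g\cdot\sigma^2\nabla g$, which is the free Dirichlet-form integrand and contributes $\E_0(g,g) = \langle g,\H_0 g\rangle_{L^2(\mu_0)}$; the diagonal term $\tfrac14 g^2\,\nabla V\cdot\sigma^2\nabla V$, already a multiplication operator; and the cross term $g\,\nabla g\cdot\sigma^2\nabla V = \tfrac12\nabla(g^2)\cdot\sigma^2\nabla V$. For the last I integrate by parts against $\mu_0$, using $\nabla\log(\dd\mu_0/\dd x) = -C^{-1}x$, which yields $-\tfrac12\int_{\R^{2n}} g^2(\Delta_{\sigma^2}V - C^{-1}x\cdot\sigma^2\nabla V)\dd\mu_0 = \tfrac12\int_{\R^{2n}} g^2\,(\H_0 V)\dd\mu_0$, where in the last step I recognize $-\Delta_{\sigma^2}V + C^{-1}x\cdot\sigma^2\nabla V$ as the operator $\H_0$ applied to the function $V$ (only the symmetry of $\sigma^2$ and of $C^{-1}$ is used; this is the point at which the Gaussian part $x\cdot(2C)^{-1}x$ of $H$, i.e.\ $\nabla H = \nabla V + C^{-1}x$, enters). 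Adding the three contributions gives $\H_0 + U$ with $U = \tfrac14\,\nabla V\cdot\sigma^2\nabla V + \tfrac12\,\H_0 V$, as claimed.

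I do not expect a genuine obstacle: the content is the product-rule expansion above plus one Gaussian integration by parts, and the only thing demanding care is keeping straight which occurrences of $\nabla V\cdot\sigma^2\nabla V$ and of $\Delta_{\sigma^2}V$ come from the Laplacian and which from the drift, and then re-packaging the surviving first-order-in-$V$ terms as $\H_0 V$. Equivalently, one can conjugate the explicit second-order expression for $\H$ directly, using $\nabla(e^{V/2}) = \tfrac12 e^{V/2}\nabla V$ and $\Delta_{\sigma^2}(e^{V/2}) = e^{V/2}(\tfrac14\,\nabla V\cdot\sigma^2\nabla V + \tfrac12\,\Delta_{\sigma^2}V)$, and observe that the first-order gradient couplings of $V$ cancel --- which is precisely why $V/2$ is the correct exponent. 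The structurally important point is that this is the same computation that will be needed in the infinite-dimensional setting, where it is only formal and its rigorous meaning rests on the almost-sure regularity of the field samples established earlier; here, with $V$ smooth and bounded from below, no such subtlety arises.
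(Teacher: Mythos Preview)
Your proposal is correct and follows essentially the same route as the paper: both arguments work at the level of the Dirichlet form, use the substitution $g = e^{-V/2}f$ (equivalently $f = e^{V/2}g$), expand $(\nabla g + \tfrac12 g\,\nabla V)\cdot\sigma^2(\nabla g + \tfrac12 g\,\nabla V)$ against $\mu_0$, and identify the cross term via $g\,\nabla g = \tfrac12\nabla(g^2)$ as $\tfrac12\langle g^2,\H_0 V\rangle_{L^2(\mu_0)}$. Your explicit Gaussian integration by parts recovering $\H_0 V = -\Delta_{\sigma^2}V + C^{-1}x\cdot\sigma^2\nabla V$ is exactly what the paper records immediately after the lemma in \eqref{H0Vform}.
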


\begin{proof}  To simplify our notation, we put $g := Tf = e^{-V/2}f$.   
Then, by the definition of $T$, 
$$
\nabla g = \nabla f e^{-V/2} - \frac12 g \nabla V,
$$
so that
$$\nabla f = \left(\nabla g + \frac12 g \nabla V\right)e^{V/2}\ .$$
Therefore
\begin{eqnarray}
\mathcal{E}(f,f) &=&  \int_{\R^{2n}}\left( \nabla g + \frac12 g \nabla V\right)\cdot \sigma^2 \left( \nabla g + \frac12 g \nabla V\right) {\rm d}\mu_0\nonumber\\
&=&  \int_{\R^{2n}} \nabla g \cdot \sigma^2 \nabla g {\rm d}\mu_0  +  \frac14 \int_{\R^{2n}} g^2 (\nabla V \cdot\sigma^2 \nabla V){\rm d}\mu_0\nonumber\\
&+&  \int_{\R^{2n}} \nabla g \cdot g\sigma^2 \nabla V {\rm d}\mu_0\ .\label{ham1}
\end{eqnarray}
Of course, 
$$
\int_{\R^{2n}} \nabla g \cdot \sigma^2 \nabla g {\rm d}\mu_0 = \langle g, \H_0 g\rangle_{L^2(\mu_0)},
$$
 and the last term can be simplified:
$$ 
\int_{\R^{2n}} \nabla g \cdot g\sigma^2 \nabla V {\rm d}\mu_0 = \frac12  \int_{\R^{2n}} \nabla g^2 \cdot \sigma^2 \nabla V {\rm d}\mu_0
=\frac12 \langle g^2, \H_0 V\rangle_{L^2(\mu_0)}\ .
$$
Altogether,
$$\langle f,\H f\rangle_{L^2(\mu_V)} \equiv  \mathcal{E}(f,f)  =  \langle g, \H_0 g\rangle_{L^2(\mu_0)}  +   \frac14 \int_{\R^{2n}} g^2 (\nabla V \cdot\sigma^2 \nabla V){\rm d}\mu_0
+ \frac12 \langle g^2, \H_0 V\rangle_{L^2(\mu_0)}\ .$$

\end{proof}

To render the formula for the effective potential  $U$ more transparent, we need an explicit expression 
for $\H_0V$. 
Such an expression can be found by using the integration by parts formula on Gauss space:
$$
 \int_{\R^{2n}} \nabla g^2 \cdot \sigma^2 \nabla V {\rm d}\mu_0  = 
\int_{R^{2n}} g^2 \left[ (-\nabla \cdot \sigma^2 \nabla V) + \nabla V \cdot \sigma^2 C^{-1}x\right]{\rm d}\mu_0\ ,
$$
where $C$ is the covariance of $\mu_0$; see (\ref{Gaussian}).
Thus, 
\begin{equation}\label{H0Vform}
\H_0V(x) = -\nabla \cdot \sigma^2 \nabla V +  \sigma^2 C^{-1}x \cdot \nabla V \ .
\end{equation}

As stated in Lemma 4.2, in order to bound $\tr (e^{-t\H})$, it suffices to bound 
$\int_{\R^{2n}} e^{-sU}\dd \mu_0$, for sufficiently large $s$. In our example, this will turn out to be 
finite, for all $s>0$. 

At this point we must make an explicit choice for $V$. 
The function $V$ that would arise in a finite-dimensional approximation to the original, infinite-dimensional problem (considered in the next section) 
has two terms.  One of these is a multiple of $|x|^{2r}$. 
This motivates us to consider the example
$$
V(x) = |x|^{2r}, 
$$
for $r > 2$. We shall see that the desired exponential integrability holds true if and only if 
$\sigma^2$ is neither ``too large'' nor ``too small''. 

For the present choice of $V$, $\nabla V = 2r|x|^{2r-2}x$, hence
\begin{equation}\label{term1}
\nabla V \cdot \sigma^2 \nabla V = 4r^2 |x|^{4r-4}(x\cdot \sigma^2 x)\ ,
\end{equation}

\begin{equation}\label{term2}
\nabla V \cdot \sigma^2 C^{-1}x   = 2r|x|^{2r-2}(x\cdot \sigma^2C^{-1} x)\ ,
\end{equation}
and
\begin{equation}\label{term3}
\nabla  \cdot \sigma^2 \nabla V   =  2r|x|^{2r-2}\tr(\sigma^2) + 4r(r-1)|x|^{2r-4}(x\cdot \sigma^2 x)\ ,
\end{equation}

In our proof of exponential relaxation to equilibrium, we will require the negative part of $U$ to be exponentially integrable.  The problematic contribution
to the potential $U$ is the first term in  (\ref{term3}).  First of all, since this includes a factor of $\tr (\sigma^2)$, we will have to assume that $\sigma^2$ be trace-class 
when we extend our analysis to infinite-dimensional examples.  However, even if this assumption is taken for granted, that term is still problematic, because  it involves $|x|^2 $, rather than $x\cdot \sigma x$,
which can be very small in high dimension if $\sigma^2$ is trace class. 
Luckily, the term (\ref{term2}) comes to our rescue, as we may choose $\sigma^2C^{-1} > I$. With such a choice, $U$ turns out to be bounded below (for the present choice of $V$),
\textit{independently of dimension}, and hence exponentially integrable. 

The choice $V(x) = |x|^{2r}$  models the term $\|\phi\|_2^{2r}$ in the modified Hamiltonian of the grand-canonical ensemble. Of course, we also need to take into account the main term, namely
$\tfrac{\lambda}{p}\|\phi\|_p^p$. We shall discuss this only in the infinite-dimensional model to which the next subsection is devoted. 
The observations made, so far, are not misleading; they suggest the right assumptions to be made in the study of the infinite-dimensional model: We shall require
$\sigma^2$ to be \textit{trace-class, but not too small}, in order to be able to conclude exponential relaxation
to a Gibbs state.

\subsection{The infinite-dimensional SDE}

We restrict our attention to a self-interaction term proportional to 
$$
- \int_{\T}\vert \phi(x) \vert^{4} dx,
$$ 
i.e., an exponent $p=4$, and discuss more general interaction terms later. Thus, we define
\begin{equation}\label{Vint}
V(\phi) = -\frac{\lambda}{4}\|\phi\|_4^4 + \kappa \|\phi\|_2^{2r} - \frac{1}{\beta} \log \widetilde Z(\kappa,\beta,\lambda)\ ,
\end{equation}
and note that
\begin{equation}\label{normalized}
\int_\Omega e^{-\beta V(\phi)}\dd \mu_\beta = 1\ ,
\end{equation}
see Theorem \ref{intthm}. 
The generalized grand-canonical Gibbs measure is denoted by 
$\widetilde \gamma_{\kappa,\beta,\lambda}$, and we then have that
$$\text{d}\widetilde \gamma_{\kappa,\beta,\lambda}  = \frac{1}{\widetilde Z(\kappa,\beta,\lambda)} e^{-\beta V(\phi)} 
\text{d}\mu_\beta\ .$$

Next, we introduce two Dirichlet forms.

\begin{defi} [Free and Interacting Dirichlet Forms]  Let $\sigma$ be a positive Hilbert-Schmidt operator. A {\em ``free'' Dirichlet form}, $\mathcal{E}_0$,
is defined by 
\begin{equation}\label{freedir}
\mathcal{E}_0(f,f) = \int_\Omega \langle Df(\phi), \sigma^2 Df(\phi)\rangle_\K \dd \mu_\beta(\phi) \ ,
\end{equation}
where $D$ is the Fr\'echet derivative defined in (\ref{Fder}).
The {\em ``interacting'' Dirichlet form} $\mathcal{E}$ is  defined by 
\begin{equation}\label{freedir}
\mathcal{E}(f,f) = \int_\Omega \langle Df(\phi), \sigma^2 Df(\phi)\rangle_\K \dd \widetilde \gamma_{\kappa,\beta,\lambda}(\phi) \ .
\end{equation}
\end{defi}

Since both Dirichlet forms are closeable on a natural domain of smooth functions, they determine two self-adjoint operators, $\mathcal{H}_0$  and $\mathcal{H}$,
by 
\begin{equation}\label{Lfree}
  \int_\Omega  f (\mathcal{H}_0 f )\dd \mu_\beta =   \mathcal{E}_0(f,f)  \ 
\end{equation}
and 
\begin{equation}\label{Lint}
  \int_\Omega  f (\mathcal{H} f) \dd \widetilde \gamma_{\kappa,\beta,\lambda}  =  \mathcal{E}(f,f)  \ .
\end{equation}

The operator $\mathcal{H}_0$ is a familiar object: it is a direct sum of multiples of number operators labelled by the wave vectors (mode indices) $k\in \mathbb{Z}$. Under mild conditions saying that
$\sigma^2$ is not too small, one verifies that
$e^{-t\mathcal{H}_0}$ is trace-class, for all $t>0$.   
In fact, these conditions are easily read off from Lemma~\ref{traceclass} in the previous subsection. 

It will be convenient to have an explicit form of $\H_0$. In order to avoid uninteresting complications, we assume 
that $\sigma$ is a power of the covariance $C = (-\Delta + m^{2})^{-1}$ of the Gaussian measure 
$\mu_{\beta}$: 
\begin{equation} \label{sigma}
\sigma = C^s,
\end{equation}
for some $s>0$ to be specified below. 
Let $\{u_k\}$ be an orthonormal basis in $\K$ consisting of eigenfunctions of
$C$, and hence of $\sigma^2$.
Of course, the eigenvectors $\{u_k\}$ of $C$ form the usual trigonometric basis, and the index $k$ 
(the wave vector) ranges over $\Z$. The eigenvalues, $\nu_k$, of $C$ corresponding to these eigenfunctions are given by 
$$
\nu_k =  \beta^{-1}((2\pi k/L)^2 + m^2)^{-1}\ .
$$

Let $D_k$ denote the directional derivative in the direction of $u_k$; i.e., for a smooth function $F$ on $\K$,
$$D_kF(\phi)  = \langle u_k, DF(\phi)\rangle_\K\ .$$
Then
\begin{equation}\label{H0Vform2}
\H_0F(x) = -\sum_{k\in \Z}\left[  \nu_k^{2s} D^2_k F(\phi)  -  \nu_k^{2s-1} \langle u_k,\phi\rangle_\K  D_k F(\phi)\right]\ .
\end{equation}

Lemma~\ref{traceclass} generalizes immediately to yield:

\begin{lm}\label{traceclass2} Under the assumption that $s < 1/2$, 
$e^{-t\H_0}$ is trace-class, and 
$$\tr[e^{-t\H_0}] = \prod_{k=1}^{\infty} \frac{1}{1 - e^{-t\nu_k^{2s-1}}}\ .$$
\end{lm}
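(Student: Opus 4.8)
The plan is to reduce Lemma~\ref{traceclass2} to the finite-dimensional computation already carried out in Lemma~\ref{traceclass}, by exhibiting $\H_0$ as an (infinite) orthogonal sum of scaled number operators, one for each real coordinate, and then passing to the limit in the product formula. First I would make the coordinate structure explicit: writing $\phi = \sum_{k\in\Z}(q_k+ip_k)u_k$ with $\{u_k\}$ the trigonometric eigenbasis of $C$ (and hence of $\sigma^2=C^{2s}$), the real Hilbert space $\K$ is coordinatized by the real variables $\{q_k,p_k\}_{k\in\Z}$, the Gaussian $\mu_\beta$ factorizes as a product of one-dimensional Gaussians with variances proportional to $\nu_k$, and formula \eqref{H0Vform2} shows that $\H_0$ acts as $\sum_k \nu_k^{2s}\bigl(-\partial_{q_k}^2 + \nu_k^{-1}q_k\,\partial_{q_k}\bigr)$ plus the identical expression in $p_k$. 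Each such one-variable operator is the scaled Ornstein--Uhlenbeck generator diagonalized by Hermite polynomials, with eigenvalues $m\,\nu_k^{2s-1}$, $m=0,1,2,\dots$, each of multiplicity one; note that in the infinite-dimensional case each eigenvalue $\nu_k^{2s-1}$ occurs with multiplicity two (once from $q_k$, once from $p_k$), which I will keep track of.

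Next I would assemble the spectrum of $\H_0$. Since $\H_0$ is the closure of the sum of commuting nonnegative one-dimensional operators acting on distinct tensor factors, its eigenfunctions are the products of Hermite polynomials, indexed by a finitely supported family $(m_k,m_k')_{k\in\Z}$ of nonnegative integers, with eigenvalue $\sum_k (m_k+m_k')\nu_k^{2s-1}$. Hence, purely formally,
\begin{equation}\label{traceprod}
\tr\bigl[e^{-t\H_0}\bigr] = \prod_{k\in\Z}\left(\sum_{m=0}^\infty e^{-tm\nu_k^{2s-1}}\right)^{2} = \prod_{k\in\Z}\frac{1}{\bigl(1-e^{-t\nu_k^{2s-1}}\bigr)^{2}}\ ,
\end{equation}
which matches the displayed claim once the indexing over $k\in\Z$ is folded into the product written over $k=1,2,\dots$ in the statement. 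To make this rigorous I would exhaust $\K$ by the finite-dimensional subspaces $V_\Lambda$ spanned by $\{u_k:|k|\le\Lambda\}$, apply Lemma~\ref{traceclass} verbatim on $V_\Lambda$ to get the finite partial product, and then invoke monotone convergence (all terms $e^{-t\H_0^{(\Lambda)}}$ are positive and increase to $e^{-t\H_0}$ in the appropriate sense, e.g.\ via the monotone limit of the associated quadratic forms / or simply because the trace of a direct sum is the sum of traces) to conclude that $\tr[e^{-t\H_0}]$ equals the infinite product.

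Finally I would check convergence of the infinite product, which is exactly where the hypothesis $s<1/2$ enters. We have $\nu_k^{2s-1} = \beta^{2s-1}((2\pi k/L)^2+m^2)^{1-2s}$, so with $1-2s>0$ the exponents $t\nu_k^{2s-1}$ grow like $|k|^{2(1-2s)}\to\infty$, and hence $\sum_{k}e^{-t\nu_k^{2s-1}}<\infty$ for every $t>0$; this forces $\prod_k (1-e^{-t\nu_k^{2s-1}})^{-2}<\infty$, i.e.\ $e^{-t\H_0}$ is trace-class. (If $s\ge 1/2$ the exponents do not tend to infinity and the product diverges, so the restriction is sharp.) The only mildly delicate point, and the step I expect to need the most care, is justifying the passage from the finite-dimensional truncations to the full operator $\H_0$: one must confirm that the operator defined abstractly through the closure of the free Dirichlet form $\mathcal{E}_0$ in \eqref{freedir} genuinely coincides with the orthogonal sum of the scaled number operators on the product space, so that its eigenvalues are exactly the sums $\sum_k(m_k+m_k')\nu_k^{2s-1}$ with no additional continuous spectrum sneaking in; this follows from standard facts about tensor products of Dirichlet forms on Gaussian spaces and the fact that finite linear combinations of Hermite products are a form core, but it is the one place where infinite-dimensionality must be handled rather than merely quoted.
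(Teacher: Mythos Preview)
Your approach is exactly what the paper intends: the paper gives no proof at all beyond the remark that ``Lemma~\ref{traceclass} generalizes immediately,'' and your proposal spells out precisely that generalization (tensor decomposition into scaled number operators, Hermite eigenbasis, convergence of the infinite product from $s<1/2$). Your observation about the multiplicity bookkeeping is well taken: the paper's displayed product $\prod_{k=1}^\infty$ is notationally loose, since the real orthonormal eigenbasis of $C$ on $\K$ is indexed by $k\in\Z$ together with a doubling from the real/imaginary split, and the finite-dimensional Lemma~\ref{traceclass} already counted $2n$ real modes; your squared factor and $k\in\Z$ indexing are the honest version of the formula.
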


Our goal is to prove that $e^{-t\H}$ is trace-class, too, for all $t>0$. It is straightforward to adapt the proof of Lemma~\ref{utran} to show that 
$\H$ is unitarily equivalent to $\H_0 + U$, for an explicit multiplication operator $U$ on 
$L^{2}(\Omega, \mu_{\beta})$; (henceforth, we will omit the $``\Omega$'' from our notation). Lemma~\ref{hyper} holds true independently of dimension and applies
to our infinite dimensional problem without any changes.  Thus, our main task, in this subsection, is to determine the explicit form of $U$ and to then prove that
$e^{-rU}$ is integrable with respect to $\mu_\beta$, for all $r>0$.

The first step in carrying out this task is to construct a unitary transformation from 
$L^2(\gamma_{\kappa,\beta,\lambda})$ to $L^2(\mu_\beta)$. 

For $f\in L^2(\gamma_{\kappa,\beta,\lambda})$,  we define 
\begin{equation}\label{Tdef}
Tf = fe^{-\beta V(\phi)/2}\ .
\end{equation}
Because of our normalization of $V$, see (\ref{Vint}), $T$ is unitary from  
$L^2(\gamma_{\kappa,\beta,\lambda})$ to $L^2(\mu_\beta)$.  

\begin{lm}
For an arbitrary smooth function $f\in L^2(\gamma_{\kappa,\beta,\lambda})$, we define $g = Tf \in  L^2(\mu_\beta)$.
Then
\begin{equation}
\label{trans}
\mathcal{E}(f,f)  =  \mathcal{E}_0(g,g)   +  \int_\Omega g^2 U \dd \mu_\beta,
\end{equation}
where the potential $U(\phi)$ is given by
\begin{equation}
\label{trans2}
U(\phi) = \frac{\beta}{2} (\mathcal{H}_0 V)(\phi) + \frac{\beta^2}{4}
 \|\sigma D V(\phi)\|_{L^{2}(\mu_{\beta})}^{2}\ .
\end{equation}
In particular, $\mathcal{H}$ is unitarily equivalent to the operator
$$\mathcal{H}_0 + U\ $$
acting on $L^{2}(\mu_{\beta})$.
\end{lm}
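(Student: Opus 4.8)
The plan is to follow the proof of Lemma~\ref{utran} essentially verbatim, replacing finite-dimensional Gaussian integration by parts by its counterpart on the Wiener space $(\Omega,\mu_\beta)$. By (\ref{normalized}), $e^{-\beta V}$ is a probability density for $\widetilde\gamma_{\kappa,\beta,\lambda}$ relative to $\mu_\beta$, so $Tf = fe^{-\beta V/2}$ is unitary from $L^2(\widetilde\gamma_{\kappa,\beta,\lambda})$ onto $L^2(\mu_\beta)$; fix a smooth $f$ (first a cylinder function depending on finitely many modes, later a general element of the form core) and put $g = Tf$. Since $V$, defined in (\ref{Vint}), is Fr\'echet differentiable $\mu_\beta$-almost surely (the functionals $\|\phi\|_4^4$ and $\|\phi\|_2^{2r}$ are smooth on the H\"older fields supporting $\mu_\beta$), one has $Df = e^{\beta V/2}\bigl(Dg + \tfrac{\beta}{2}g\,DV\bigr)$, and substituting this into the definition of $\mathcal{E}$ the factor $e^{\beta V}$ cancels the density $e^{-\beta V}$:
\[
\mathcal{E}(f,f) = \int_\Omega \bigl\langle Dg + \tfrac{\beta}{2}g\,DV,\ \sigma^2\bigl(Dg + \tfrac{\beta}{2}g\,DV\bigr)\bigr\rangle_\K \dd\mu_\beta .
\]
Expanding the quadratic form gives three pieces: $\int_\Omega\langle Dg,\sigma^2 Dg\rangle_\K\dd\mu_\beta = \mathcal{E}_0(g,g)$, the square term $\tfrac{\beta^2}{4}\int_\Omega g^2\langle DV,\sigma^2 DV\rangle_\K\dd\mu_\beta$, and the cross term $\beta\int_\Omega g\,\langle Dg,\sigma^2 DV\rangle_\K\dd\mu_\beta = \tfrac{\beta}{2}\int_\Omega \langle D(g^2),\sigma^2 DV\rangle_\K\dd\mu_\beta$.

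The heart of the matter is identifying the cross term with $\tfrac{\beta}{2}\int_\Omega g^2(\mathcal{H}_0 V)\dd\mu_\beta$. I would do this mode by mode. Writing $\langle D(g^2),\sigma^2 DV\rangle_\K = \sum_k \nu_k^{2s}(D_k g^2)(D_k V)$ (using $\sigma^2 u_k = \nu_k^{2s}u_k$) and using that $\langle u_k,\phi\rangle_\K$ is, under $\mu_\beta$, a centred Gaussian of variance $\nu_k$ independent of the other modes, so that the Gaussian integration-by-parts formula $\int_\Omega\langle u_k,\phi\rangle_\K h\,\dd\mu_\beta = \nu_k\int_\Omega D_k h\,\dd\mu_\beta$ holds directionally, I get
\[
\int_\Omega (D_k g^2)(D_k V)\dd\mu_\beta = \int_\Omega g^2\bigl(-D_k^2 V + \nu_k^{-1}\langle u_k,\phi\rangle_\K D_k V\bigr)\dd\mu_\beta .
\]
Multiplying by $\nu_k^{2s}$, summing over $k$, and comparing with the explicit expression (\ref{H0Vform2}) for $\mathcal{H}_0$ yields exactly $\int_\Omega g^2(\mathcal{H}_0 V)\dd\mu_\beta$. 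Collecting the three terms gives $\mathcal{E}(f,f) = \mathcal{E}_0(g,g) + \int_\Omega g^2 U\dd\mu_\beta$ with $U = \tfrac{\beta}{2}\mathcal{H}_0 V + \tfrac{\beta^2}{4}\langle DV,\sigma^2 DV\rangle_\K$, which is (\ref{trans2}) once one writes $\langle DV,\sigma^2 DV\rangle_\K = \|\sigma DV\|_\K^2$. For the operator statement: by unitarity of $T$, $\langle f,\mathcal{H}f\rangle_{L^2(\widetilde\gamma_{\kappa,\beta,\lambda})} = \langle g,(T\mathcal{H}T^{-1})g\rangle_{L^2(\mu_\beta)}$, while the identity just proved together with (\ref{Lfree}) gives $\langle f,\mathcal{H}f\rangle_{L^2(\widetilde\gamma_{\kappa,\beta,\lambda})} = \mathcal{E}(f,f) = \langle g,(\mathcal{H}_0 + U)g\rangle_{L^2(\mu_\beta)}$; since this holds on a form core, polarization and closability upgrade it to the stated unitary equivalence $T\mathcal{H}T^{-1} = \mathcal{H}_0 + U$.

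The main obstacle I anticipate is not the algebra but the analytic bookkeeping needed to make these manipulations legitimate in infinite dimensions: one must check that $DV(\phi)\in\K$ and that $\langle DV,\sigma^2 DV\rangle_\K$ is finite $\mu_\beta$-a.s.\ and suitably integrable against $g^2\,\dd\mu_\beta$ (using H\"older regularity of samples together with the trace-class hypothesis on $\sigma^2$, the relation $\sigma = C^s$ with $s<1/2$, and the exponential integrability estimates of Theorem~\ref{intthm}), that the mode sum defining $\langle D(g^2),\sigma^2 DV\rangle_\K$ converges absolutely so that the directional integrations by parts may be summed and interchanged with the $\mu_\beta$-integral, and that the resulting quadratic-form identity determines the self-adjoint operator $\mathcal{H}_0 + U$ via its closed form on the chosen core, so that ``unitarily equivalent'' is meaningful. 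Restricting first to cylinder functions makes the integration by parts literally finite-dimensional and isolates precisely these convergence and domain issues, which are then handled by a limiting argument.
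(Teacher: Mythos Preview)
Your proposal is correct and follows essentially the same route as the paper's proof: compute $Df$ in terms of $Dg$ and $DV$, expand the quadratic form, and identify the cross term with $\tfrac{\beta}{2}\int g^2(\mathcal{H}_0 V)\dd\mu_\beta$. The only cosmetic difference is that the paper handles the cross term in one line by recognizing $\int_\Omega \langle D(g^2),\sigma^2 DV\rangle_\K\dd\mu_\beta = \mathcal{E}_0(g^2,V) = \langle g^2,\mathcal{H}_0 V\rangle_{L^2(\mu_\beta)}$ directly from the defining relation (\ref{Lfree}), whereas you unpack this same identity mode-by-mode via Gaussian integration by parts and then read off the result from (\ref{H0Vform2}); your additional remarks on analytic bookkeeping are more careful than anything the paper writes down.
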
 

\begin{proof}
Fix a smooth $f\in L^2(\mu_\beta)$. Then with $g = Tf$,
$$Dg = (Df )e^{-\beta V(\phi)/2}  -\alpha g \ $$
where
$$\alpha(\phi) = \frac{\beta}{2}DV(\phi)\ .$$
It follows that
$$Df = (Dg + \alpha g) e^{\beta V(\phi)/2}$$
and hence that 
\begin{eqnarray}
\mathcal{E}(f,f) &=& \int_\Omega \|\sigma(Dg + \alpha g)  \|^2_{L^{2}(\mu_{\beta})} \dd \mu_\beta\nonumber\\
&=&  \mathcal{E}_0(g,g) + 2 \int_\Omega \langle Dg,\sigma^2\alpha g\rangle_\K  \dd \mu_\beta  +  \int_\Omega g^2  \|\sigma\alpha   \|^2_{L^{2}(\mu_{\beta})} \dd \mu_\beta\ .\nonumber
\end{eqnarray}
We now observe that
$$
2 \int_\Omega \langle Dg,\sigma^2 \alpha g\rangle_\K  \dd \mu_\beta  =   \int_\Omega \langle D(g^2),\sigma^2 \alpha \rangle_\K  \dd \mu_\beta 
= \frac{\beta}{2} \mathcal{E}_0(g^2, V)\ .$$
It then follows from the definition of $\mathcal{H}_0$ that
$$2 \int_\Omega \langle Dg,\sigma^2 \alpha g\rangle_\K  \dd \mu_\beta  =  \int_\Omega g^2 \mathcal{H}_0 V
 \dd \mu_{\beta}\ .$$

\end{proof}

We now make a special choice of $V$, namely
\begin{equation}\label{VIdef}
V(\phi) = -\frac{\lambda}{4}\|\phi\|_4^4 + \frac{\kappa}{2r}\|\phi\|_2^{2r}\ .
\end{equation}

We propose to prove that, with this choice of $V$, and with $U$ defined by  (\ref{trans2}), 
$e^{-sU}$ is integrable with respect to $\mu_\beta$, for any $s>0$,
provided only that $r>9$ and $\sigma^2C^{-1}  \leq (-\Delta + m^2)^{\gamma}$, with $0 <\gamma < 1/8$. 
Since Lemma~\ref{hyper} generalizes directly to infinitely many dimensions, we will then 
have  proven that $e^{-t\H}$ is trace- class. 
Since it is evident from the definition of the Dirichlet form that the null space of $\H$ is spanned by the constant functions, one concludes that the spectral gap $E_1$ is strictly positive. 
Thus, the next result provides the exponential relaxation to a Gibbs state that we 
have been seeking to prove.

\begin{thm}\label{exintthm2}
Let  $V$ be given by (\ref{VIdef}), with $r>9$, and suppose that 
$I \leq \sigma^2C^{-1}  \leq (-\Delta + m^2)^{\gamma}$, 
with $0 <\gamma < 1/8$.
Let  $U$ be  given by (\ref{trans2}). Then

$$
\int_\Omega e^{-\tau U}\dd \mu_\beta < \infty\ ,
$$
for all $\tau>0$.
\end{thm}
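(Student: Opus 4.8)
The plan is to compute $U$ explicitly for the choice of $V$ in (\ref{VIdef}), separate out the dangerous terms, and show that only a single potentially-negative term survives, which is controlled by exponential integrability estimates already proven in Theorem~\ref{intthm}. First I would write $V = V_1 + V_2$ with $V_1 = -\tfrac{\lambda}{4}\|\phi\|_4^4$ and $V_2 = \tfrac{\kappa}{2r}\|\phi\|_2^{2r}$, and compute $DV$, $\sigma DV$, and $\mathcal{H}_0 V$ term by term using the explicit form (\ref{H0Vform2}) of $\mathcal{H}_0$ (equivalently the infinite-dimensional analog of (\ref{H0Vform})). The Fr\'echet derivatives are $DV_1(\phi) = -\lambda|\phi|^2\phi$ and $DV_2(\phi) = \kappa\|\phi\|_2^{2r-2}\phi$ (identifying one-forms with elements of $\K$ as in the text). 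Then $\|\sigma DV\|^2$ expands into a sum of three quadratic-in-the-gradient pieces; the cross term between $V_1$ and $V_2$ contributes (up to constants) $\kappa\lambda\|\phi\|_2^{2r-2}\langle |\phi|^2\phi,\sigma^2\phi\rangle_\K$, which is of lower order in the field than the pure $V_2$ piece $\kappa^2\|\phi\|_2^{4r-4}\langle\phi,\sigma^2\phi\rangle_\K$ and will be absorbed. The key structural point, exactly as in the finite-dimensional discussion, is that the assumption $\sigma^2 C^{-1}\geq I$ makes the dominant contributions to $U$ from $V_2$ positive and growing like $\|\phi\|_2^{4r-2}$ (from $\mathcal{H}_0 V_2$ via the term $\sigma^2 C^{-1}\phi\cdot DV_2$, compare (\ref{term2})), which dominates all negative terms coming from $V_1$.

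The second step is to identify precisely the negative terms. From $\mathcal{H}_0 V_1$ we get a Laplacian-type term $\nabla\cdot\sigma^2 DV_1$ and a term $\sigma^2 C^{-1}\phi\cdot DV_1$; from $\|\sigma DV_1\|^2$ we get $\lambda^2\|\sigma(|\phi|^2\phi)\|^2\geq 0$, which is harmless. The genuinely problematic contributions are thus (i) $-\tfrac{\beta\lambda}{2}\langle\sigma^2 C^{-1}\phi,|\phi|^2\phi\rangle_\K$, which after writing $\sigma^2 C^{-1} = (-\Delta+m^2)^\gamma$ (the upper bound in the hypothesis, with $\gamma<1/8$) is exactly a term of the form $\mu\int_\T |\phi|^2\overline\phi\,(m^2-\Delta)^\gamma\phi\,dx$, and (ii) a $\tr(\sigma^2)$-type term times $\|\phi\|_4^{?}$-lower-order pieces from $\nabla\cdot\sigma^2 DV_1$, which are bounded by a fixed polynomial in $\|\phi\|_\infty$ and $\|\phi\|_2$ times $\tr(\sigma^2)<\infty$. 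Term (i) is precisely the expression appearing in the bound (\ref{sgb2}) of Theorem~\ref{intthm}; since (\ref{sgb2}) holds for $r>9$ and $\gamma<1/8$ — which are exactly our hypotheses — the integrability of $e^{-\tau U}$ against $\mu_\beta$ follows once all the positive contributions (in particular the $\|\phi\|_2^{4r-2}$ growth, which dominates, and after the cross-term is absorbed via arithmetic-geometric mean) are discarded from the exponent and only terms bounded above by $C\bigl|\int_\T|\phi|^2\overline\phi(m^2-\Delta)^\gamma\phi\,dx\bigr| + (\text{lower order controlled by } e^{-\kappa'\|\phi\|_2^{2r}})$ remain.

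So the argument is: (1) expand $U = U_+ + U_-$ where $U_+\geq 0$ collects the positive, field-growing terms (driven by $\sigma^2 C^{-1}\geq I$) and $U_-$ collects the rest; (2) bound $-U \leq -U_- \leq C_1\bigl|\int_\T|\phi|^2\overline\phi(m^2-\Delta)^\gamma\phi\,dx\bigr| + C_2\|\phi\|_\infty^{?}\|\phi\|_2^{?}$ with the $\|\phi\|_\infty$ factors handled by Lemma~\ref{linf} and absorbed against a small multiple of $\kappa\|\phi\|_2^{2r}$ coming from $U_+$ via arithmetic-geometric mean (keeping a residual $-\kappa'\|\phi\|_2^{2r}$); (3) invoke (\ref{sgb2}) with a suitably chosen $\mu$ depending on $\tau$ and the constants. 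The main obstacle I anticipate is purely bookkeeping rather than conceptual: keeping careful track of the powers of $\|\phi\|_2$ in the various terms of $\mathcal{H}_0 V$ and $\|\sigma DV\|^2$ to confirm that the $V_2$-generated positive term $\|\phi\|_2^{4r-2}$ really does dominate the cross terms (which scale like $\|\phi\|_2^{2r-2}\|\phi\|_4^4$ or $\|\phi\|_2^{2r}\|\phi\|_\infty^2$), and verifying that the one irreducibly cubic-in-$\phi$-with-a-derivative term lands exactly in the form required by (\ref{sgb2}) — i.e., that $\gamma<1/8$ (not merely $<1/4$) is what the $4\gamma<1/2$ condition in the proof of Theorem~\ref{intthm} demands after the extra Cauchy–Schwarz step that produces $(m^2-\Delta)^{2\gamma}$.
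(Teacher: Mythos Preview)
Your strategy is essentially the paper's: compute $U$, isolate the single genuinely dangerous term $-\lambda\langle\phi,\sigma^2 C^{-1}|\phi|^2\phi\rangle_\K$, use $\sigma^2 C^{-1}\geq I$ to extract a positive $\kappa\|\phi\|_2^{2r}$ from $\H_0 V_2$, and then invoke (\ref{sgb2}) with $r>9$ and $\gamma<1/8$. That part is correct and matches.

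However, you are working harder than necessary, and one of your intermediate claims is off. The paper's proof begins with the one-line observation that $U(\phi)=\tfrac{\beta}{2}\H_0 V(\phi)+\tfrac{\beta^2}{4}\|\sigma DV(\phi)\|^2\geq \tfrac{\beta}{2}\H_0 V(\phi)$, since $\|\sigma DV\|^2$ is a complete square. This eliminates \emph{all} of your cross-term analysis in one stroke: there is no need to expand $\|\sigma(DV_1+DV_2)\|^2$, no cross term $\kappa\lambda\|\phi\|_2^{2r-2}\langle|\phi|^2\phi,\sigma^2\phi\rangle$ to absorb, and no arithmetic--geometric mean step. One then only has to bound $\H_0 V$ from below, which the paper does explicitly, arriving at
\[
\H_0 V(\phi)\ \geq\ -\lambda\langle\phi,\sigma^2 C^{-1}|\phi|^2\phi\rangle_\K + \tfrac{\kappa}{2}\|\phi\|_2^{2r} - \text{const.}
\]
and (\ref{sgb2}) finishes.

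Your slip: the positive term you identify ``growing like $\|\phi\|_2^{4r-2}$'' from $\sigma^2 C^{-1}\phi\cdot DV_2$ actually scales as $\|\phi\|_2^{2r}$ (it is $\kappa\|\phi\|_2^{2r-2}\langle\phi,\sigma^2 C^{-1}\phi\rangle\geq\kappa\|\phi\|_2^{2r}$; compare (\ref{term2}), which gives $|x|^{2r}$, not $|x|^{4r-2}$). The $4r$-type growth would come from $\|\sigma DV_2\|^2=\kappa^2\|\phi\|_2^{4r-4}\langle\phi,\sigma^2\phi\rangle$, but $\sigma^2$ alone has no useful lower bound (it is trace-class), so that term does not help. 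This matters for your plan because with only $\|\phi\|_2^{2r}$ available you cannot straightforwardly absorb a cross term like $\|\phi\|_2^{2r}\|\phi\|_\infty^2$ by comparison of powers of $\|\phi\|_2$ alone. The paper's route sidesteps this entirely by never expanding the square; if you insist on expanding, you would need to re-absorb the cross term back into the full square via Cauchy--Schwarz, which is just a roundabout way of rediscovering $\|\sigma DV\|^2\geq 0$.
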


\begin{proof}   Because $U(\phi) \geq (\H_{0}V)(\phi)$, it suffices to prove that 
\begin{equation}\label{integr}
\int_\Omega e^{-\tau (\H_{0}V)(\phi)}\dd \mu_{\beta}(\phi) < \infty\ ,
\end{equation}
for all $\tau > 0$.

By direct computation,
$$DV(\phi) = -\lambda |\phi|^2\phi  + \kappa\|\phi\|_2^{2r-2}\phi\ ,$$
and 
\begin{eqnarray}
(\H_{0} V)(\phi) &=& \lambda\left[  |\phi|^2 \tr \sigma^2 +2 \langle \phi ,\sigma^2 \phi\rangle_\K -  \langle \phi ,\sigma^2C^{-1} |\phi|^2 \phi\rangle_\K\right] \nonumber\\
&-& \kappa\left[ \|\phi\|_2^{2r-2}  \tr \sigma^2  +  (2r-2) \|\phi\|_2^{2r-4}\langle \phi ,\sigma^2 \phi\rangle_\K   - 
 \|\phi\|_2^{2r-2}\langle \phi ,\sigma^2C^{-1}  \phi\rangle_\K\right] .
\nonumber
\end{eqnarray}
Thus, choosing the exponent $s$ in (\ref{sigma}) such that $\sigma^2C^{-1} \geq I$, we have that
\begin{eqnarray}
(\H_{0}V)(\phi) &\geq&   -\lambda \langle \phi ,\sigma^2C^{-1} |\phi|^2 \phi\rangle_\K + \frac{\kappa}{2}\|\phi\|_2^{2r}\nonumber\\
&+& \kappa\left[  \frac{1}{2}\|\phi\|_2^{2r}  - (2r-1)\tr \sigma^2 \|\phi\|_2^{2r-2} \right]\ .
\nonumber\\
&\geq& -\lambda \langle \phi ,\sigma^2C^{-1} |\phi|^2 \phi\rangle_\K + \frac{\kappa}{2}\|\phi\|_2^{2r}\nonumber\\
&-&  \frac{2}{2r-2}\left( \frac{(2r-1)(2r-2)}{2r} \tr (\sigma^2) \right)\ .\nonumber
\end{eqnarray}
The results in Sect. 3 then imply that (\ref{integr}) holds.
\end{proof}

\section{Stochastic time evolution for the canonical ensemble}

To obtain a stochastic time evolution that leaves the canonical measure invariant, the noise must preserve the spheres $\{\phi \ :\ N(\phi) = n\}$.
To write down a suitable SDE, let $P_\phi$ denote the orthogonal projection onto the orthgonal complement of $\phi \in \L^2(\T)$. 

Consider the SDE 
$${\rm d}\phi  =    P_\phi \sigma {\rm d}w\  $$
with $\sigma$ as in Section 5.
Then, by Ito's formula,
$${\rm d}\|\phi\|_2^2 = 2 \langle \phi, P_\phi \sigma {\rm d}w \rangle + \frac12 \tr \sigma^2 P_\phi{\rm d}t = \frac12 \tr \sigma^2 P_\phi{\rm d}t\ .$$
Thus, projecting out the normal component of the fluctuations of our noise process still results in a norm that grows, but in a simple way.
We compensate for this by adding a drift: By the same calcluation, the solutions of 
$$ {\rm d}\phi  =     -\frac12 \left(\tr \sigma^2 P_\phi\right) \frac{\phi}{\|\phi\|_2^2}{\rm d}t  +   P_\phi \sigma {\rm d}w\  $$
satisfy ${\rm d}\|\phi\|_2^2 = 0$, so that the sets  $\{\phi \ :\ N(\phi) = n\}$ are invariant.  The same is true for the solutions of
\begin{equation}\label{sdeAA}
{\rm d}\phi =  b(\phi){\rm d}t   -\frac12 \left(\tr \sigma^2 P_\phi\right) \frac{\phi}{\|\phi\|_2^2}{\rm d}t  +   P_\phi \sigma {\rm d}w\  \ ,
\end{equation}
for any vector field $b(\phi)$ that is purely tangential; i.e., such that $P_\phi b(\phi) = 0$, for all $\phi$. 

Of course, $ J D H(\phi) $ is such a vector field, but as in the case of the GGC ensemble, making this choice would not preserve the energy,
for $\sigma \neq 0$. The same computation that provided the appropriate correction in the GGC ensemble shows that
 \begin{equation}\label{invar4A}
   b(\phi)   = - \frac12 \sigma^2 P_\phi \nabla H + J \nabla H(\phi)\ 
    \end{equation}
is the appropriate choice.

Then, proceeding as before, this leads us to consider the Dirichlet form
$$\mathcal{E}_{n,0} = \int |P_\phi \sigma \nabla \phi|^2 {\rm d}\nu_{n,\beta}\ .$$
Unlike the corresponding ``free'' Gaussian Dirichlet form, this form is not a well-studied object, and estimates on its spectrum, in particular ones that would lead to the analogue of Lemma~\ref{traceclass2}, are not available in the literature. Such estimates will be developed in a companion paper where
we will study relaxation to the canonical Gibbs measure. 

\bigskip

\noindent{\bf Acknowledgements} Work of E.A.C. is partially supported
by N.S.F. grant DMS 1201354.  Work of J.L.L. is partially supported by
N.S.F. grant DMR 1104500 and AFOSR grant FA9550.  Most of the work on this
paper was carried out while J.F. and J.L.L. were visiting the School
of Mathematics of the Institute for Advanced Study. We thank our
colleagues at the IAS and, in particular, Tom Spencer H. T. Yau for their
generous hospitality. The stay of J.F. at IAS was supported by
The Fund for Math and The Robert and Luisa Fernholz Visiting
Professorship Fund.  The authors would like to thank Wei-Min Wang for 
discussions during the work.

\vfill\break

\end{document}